\documentclass[
  prl,
  aps,
  reprint,
  superscriptaddress,
  floatfix,
  longbibliography,
  nofootinbib
]{revtex4-2}

\usepackage{amsmath,amssymb,amsthm,mathtools}
\usepackage{bm}
\usepackage{array}
\usepackage{booktabs}
\usepackage{microtype}
\usepackage{tikz}
\usetikzlibrary{arrows.meta}
\usepackage[colorlinks=true,linkcolor=blue,citecolor=blue,urlcolor=blue]{hyperref}

\usepackage{dsfont} 

\usepackage{centernot}
\usepackage{graphicx}
\usepackage{tikz}
\usepackage{amsmath}
\usetikzlibrary{patterns}
\usepackage{thmtools}
\usepackage{thm-restate}
\usepackage[ruled,vlined,linesnumbered]{algorithm2e}

 \usepackage{footnotebackref} 

\usepackage[capitalize,nameinlink]{cleveref}

\usepackage[most]{tcolorbox}  
\allowdisplaybreaks            
\newcommand{\Delx}{\Delta_{XY}}  
\newcommand{\Delq}{\Delta_N}     

\makeatletter

\let\FC@parse\NAT@parse
\def\NAT@parse#1{
  \@ifundefined{fcite@#1}{
    \global\@namedef{fcite@#1}{}
    \Hy@raisedlink{\hypertarget{fcite.#1}{}}
  }{}%
  \FC@parse{#1}
}

\AtBeginDocument{%
  \let\FC@lbibitem\@lbibitem
  \def\@lbibitem[#1]#2{\let\NAT@parse\FC@parse\FC@lbibitem[#1]{#2}}%
}

\def\NAT@anchor#1#2{%
 \hyper@natanchorstart{#1\@extra@b@citeb}%
  \def\@tempa{#2}\@ifx{\@tempa\@empty}{}{%
    \@ifundefined{fcite@#1}%
      {\@biblabel{#2}}%
      {\begingroup\def\@linkcolor{black}\hyperlink{fcite.#1}{\@biblabel{#2}}\endgroup}%
  }%
 \hyper@natanchorend
}
\makeatother

\theoremstyle{plain}

\newtheorem{theorem}{Theorem}

\newtheorem{lemma}{Lemma}

\newtheorem{remark}{Remark}
\newtheorem*{unnumberedtheorem}{Theorem}

\newtheorem{subtheorem}{Theorem}

\makeatletter
\@addtoreset{subtheorem}{theorem}
\@addtoreset{sublemma}{lemma}
\makeatother

\theoremstyle{definition}
\newtheorem{definition}{Definition}

\newcommand{\cL}{\mathcal L}
\newcommand{\cQ}{\mathcal Q}
\newcommand{\cG}{\mathcal G}
\newcommand{\NS}{\mathcal{NS}}
\newcommand{\R}{\mathbb R}
\newcommand{\ip}[2]{\left\langle #1,#2\right\rangle}

\newcommand{\conv}{\operatorname{conv}}

\newcommand{\ket}[1]{\left|#1\right\rangle}

\newcommand{\1}{\mathds{1}} 

\usepackage[normalem]{ulem}
\usepackage{cancel}
\newif\ifshowchanges
\showchangestrue
\definecolor{revisionblue}{RGB}{0,70,190}
\DeclareRobustCommand{\remove}[1]{%
  \ifshowchanges{\color{red}\hypersetup{allcolors=red}%
    \ifmmode\cancel{#1}\else\sout{#1}\fi}\fi}

\newcommand{\revisiondisplay}[1]{%
  \ifshowchanges\par\addvspace{\smallskipamount}\noindent
  \begingroup\setbox0=\hbox{$\displaystyle #1$}%
  \hfill\ifdim\wd0>\linewidth
    \resizebox{\linewidth}{!}{\usebox0}\else\usebox0\fi\hfill\mbox{}%
  \par\endgroup\addvspace{\smallskipamount}\fi}
\newcommand{\revisioninline}[1]{%
  \begingroup\setbox0=\hbox{#1}%
  \ifdim\wd0>.88\linewidth\resizebox{.88\linewidth}{!}{\usebox0}%
  \else\usebox0\fi\endgroup}
\begin{document}

\title{Bell nonlocality does not imply violation of nonlocal games}

\author{Ricardo Faleiro}
\affiliation{Instituto de Telecomunicações, Av. Rovisco Pais, 1049-001, Lisboa, Portugal}

\author{Lucas E. A. Porto}
\affiliation{Sorbonne Université, CNRS, LIP6, F-75005 Paris, France}

\author{Marco Túlio Quintino}
\affiliation{Sorbonne Université, CNRS, LIP6, F-75005 Paris, France}

\author{Flavien Hirsch}
\affiliation{Instituto de Telecomunicações, Av. Rovisco Pais, 1049-001, Lisboa, Portugal}

\begin{abstract}Nonlocal games offer a conceptually clear way to reason about Bell nonlocality. However, we show that as originally defined, and most commonly used, \textit{i.e.} with a deterministic predicate, they cannot be complete witnesses of nonlocality. Concretely, we prove the existence of nonlocal behaviours which do not violate any such nonlocal game. Furthermore, we also show that quantum mechanics can produce these game-classical nonlocal behaviours. We prove that this phenomenon exists in every bipartite scenario $(X,Y,A,B)$ with $X,Y\geq2$ and $A,B\geq4$.
\end{abstract}

\maketitle

\emph{Introduction.---}
Bell nonlocality shows that quantum systems can produce correlations that no natural classical common cause can explain. These correlations challenge our understanding of reality and enable various tasks impossible classically, for instance cryptographic protocols whose security does not depend on trusting the internal operation of the devices used.

In the simplest bipartite Bell experiment, Alice and Bob receive inputs \(x\) and \(y\) and, without communicating, produce outputs \(a\) and \(b\). The numbers of possible inputs are denoted by \(X\) and \(Y\), and the numbers of possible outputs by \(A\) and \(B\); the scenario is abbreviated \((X,Y,A,B)\). If a classical common cause, formalized as a Local Hidden-Variable (LHV) model, can reproduce the resulting input-output correlations (called a behaviour), then that behaviour is local; otherwise, it is nonlocal. In a Bell scenario, nonlocality is exhibited by breaking the classical correlation bounds (attained from LHVs) of so-called Bell inequalities. Quantum theory predicts, and experiments confirm, the existence of nonlocal behaviours that violate Bell inequalities~\cite{Bell64,CHSH,Brunner14,Hensen2015}. 

Bell inequalities provide a clear way to exhibit nonlocality. Moreover, mathematically, the set of local behaviours \(\mathcal{L}\) forms a polytope, fully characterized by finitely many Bell inequalities, which define the boundary of the local set \cite{froissart1981constructive,fine1982hidden}. Thus, for a fixed behaviour, a proof of nonlocality becomes a well-defined membership problem. 

Nonlocal games turn such tests of Bell nonlocality into interactive proofs with a transparent operational meaning~\cite{cleve2004consequences}. Instead of Alice and Bob accumulating,  across many runs of their experiment, the overall input-output statistics observed, a nonlocal game instead describes an
interaction turn-by-turn. In a single turn, a referee draws questions $x$ and $y$, sends one to each of two separated non-communicating
players, receives their answers $a$ and $b$, and declares whether they win or lose based on a predicate $V(a,b|x,y)$. The game is played for many turns. 
One aspect which benefits this description is the conceptual clarity provided by the interactive picture and the operational significance of the figures of merit (winning probabilities of the game) used to show nonlocality. 
This approach is standard in relating Bell nonlocality with notions of complexity
theory~\cite{cleve2004consequences,buhrman2010nonlocality,palazuelos2016survey}, and is particularly natural for the statistical analysis of Bell tests, simplifying the treatment of some loopholes like the detection loophole and the memory loophole, as well as providing physically meaningful notions of quantum-to-classical gap~\cite{araujo2020bell,Barrett_2002_memory_loophole,Hensen2015}. 

In addition to their interactive character, the original and still widely adopted convention presupposes that such nonlocal games have deterministic predicates, that is, \(V(a,b|x,y) \in \{0,1\}\). Nevertheless, some works extended the definition of nonlocal games to allow ``probabilistic predicates", that is, upon receiving inputs $x$ and $y$ and producing outputs $a$ and $b$, Alice and Bob win the game with probability $V(a,b|x,y)\in [0,1]$ \cite{palazuelos2016survey}. Furthermore, it was also later shown, that under this broader convention one recovers every possible Bell inequality \cite{araujo2020bell}. Throughout this work, nonlocal games have \textit{deterministic predicates} as originally defined~\cite{cleve2004consequences}.

At this stage, a natural question one can ask is whether nonlocal games can detect any nonlocal correlation.
Granted, the set of Bell inequalities that can be written as nonlocal games is more restricted but, perhaps, nonlocal
games could still be rich enough to detect every nonlocal behaviour. For instance, this is the case in the simplest scenario i.e. \((2,2,2,2)\),  since every nonlocal behaviour therein is detectable by the Clauser-Horne-Shimony-Holt (CHSH) inequality~\cite{CHSH}, which, as is well known, can be written as a nonlocal game. But indeed, not only the CHSH, the vast majority of Bell inequalities that have been studied and experimentally tested are equivalent to nonlocal games, whether they have been explicitly represented as nonlocal games, or not. Just to name a few more examples: the Greenberger–Horne–Zeilinger (GHZ)~\cite{GHZ},  Collins–Gisin–Linden–Massar–Popescu (CGLMP)~\cite{CGLMP} , \(I_{3322}\)~\cite{I3322} inequalities; all binary-output full-correlator inequalities, the recently introduced Unanimous inequalities \cite{smells26}; on top of, obviously, all inequalities which are already explicitly represented as games,  XOR games~\cite{cleve2004consequences}, unique
games~\cite{kempe2010unique}, pseudo-telepathy games such as the magic
square~\cite{brassard2005pseudotelepathy}, and binary constraint-system
games~\cite{cleve2014constraint}.

This means that, so far, despite being more mathematically constrained, nonlocal games have, to a large extent, been implicitly made to be operational equivalents of Bell inequalities. Thus, we are interested in knowing whether this is warranted, which would mean that nonlocal games can detect all possible nonlocal behaviours.

Here we show that there exist nonlocal behaviours that do not violate any deterministic-predicate nonlocal game in the same Bell scenario. Furthermore, quantum mechanics can produce such behaviours.

We define the set of behaviours that satisfy every nonlocal game, \(\mathcal{G}\), show that it is a polytope, characterize its facets  and show it to be in general larger than the local polytope  \(\mathcal{L}\).

\emph{Behaviours, Bell inequalities, and nonlocal games.---}
Consider a bipartite Bell scenario $(X,Y,A,B)$.  Alice and Bob receive inputs
$x\in\{0,\ldots,X-1\}$ and $y\in\{0,\ldots,Y-1\}$ and return outputs
$a\in\{0,\ldots,A-1\}$ and $b\in\{0,\ldots,B-1\}$.  A behaviour
$p=\{p(ab|xy)\}_{abxy}$ contains the joint probabilities of Alice and Bob outputting $a,b$ given $x,y$ and thus is normalized and nonnegative.  It is no-signalling
when Alice's marginal  is independent of $y$, i.e., $\sum_b p(ab|xy)=p_A(a|x)$ is independent of $y$, and Bob's marginal  is independent of $x$, i.e., $\sum_a p(ab|xy)=p_B(b|y)$~\cite{Brunner14}.  We write $\NS$ for the polytope of all such
behaviours and take its affine hull as the ambient space throughout.  A
behaviour is Bell local, $p\in\cL$, if it
admits a local hidden-variable decomposition
\begin{equation}
 p(ab|xy)=\sum_\lambda \pi(\lambda)
 p_A(a|x,\lambda)p_B(b|y,\lambda),
 \label{eq:local-model}
\end{equation}
where $\pi(\lambda)$ is a probability distribution and $p_A$ and $p_B$ are
normalized conditional response distributions.  We denote by $\cL$ the set
of all behaviours admitting the decomposition in
Eq.~\eqref{eq:local-model}. Quantum behaviours have the form
$p(ab|xy)=\operatorname{tr}[\rho(A_{a|x}\otimes B_{b|y})]$, where \(\rho\) is a quantum state (represented by a positive semi-definite  matrix with unit trace) and \(A_{a|x},B_{b|y}\)  positive operator-valued measures (POVMs); their convex set is
denoted by $\cQ$, with $\cL\subseteq\cQ\subseteq\NS$. For finite scenarios, $\cL$ is a polytope, and its facet-defining
inequalities are the tight Bell
inequalities~\cite{froissart1981constructive,pitowsky1989book}. We show the characterization of this set as a polytope in Appendix~\ref{app:definitions}. 

A Bell functional is a real tensor $M$ acting on a behaviour as
\begin{equation}
 \ip{M}{p}=\sum_{a,b,x,y}M_{ab|xy}p(ab|xy). 
\end{equation}
Its local bound is defined as
$\beta_{\mathrm L}(M)=\max_{p\in\cL}\ip{M}{p}$.  The pair composed by the functional and its corresponding local bound, typically represented as
$\ip{M}{p}\leq\beta_{\mathrm L}(M)$ is a Bell inequality. Below, we define the class of Bell functionals which are nonlocal games, as well as the set of \textit{game-classical} behaviours, $\cG$.

\begin{definition}[Nonlocal game~\cite{cleve2004consequences}]  \label{def:NLgame}
  A Bell functional $G$ is a nonlocal game if it can be  written as \(
 G_{ab|xy}=\mu(x,y)V(a,b|x,y)\), 
where \(\mu(x,y)\) is a probability distribution over the inputs, and $V(a,b|x,y)\in\{0,1\}$ a (deterministic) predicate.  Its associated local bound is also called the classical value of the game 
\(\nonumber
 \omega_{\mathrm L}(G)=\max_{p\in\cL}\ip{G}{p}\). 
\end{definition}

\begin{definition}
The set of no-signalling behaviours that do not violate any nonlocal game is defined as
\begin{align}
 \cG:=\bigl\{p\in\NS:\ip{G}{p}\leq\omega_{\mathrm L}(G)
 \quad\forall G\bigr\}.
 \label{eq:game-polytope}
\end{align} We call \(\cG\) the game-classical  set.
\end{definition}

Every nonlocal game is by definition a Bell inequality, so we have the direct inclusion $\cL\subseteq\cG$. However, equation \eqref{eq:game-polytope} provides a characterization of $\cG$ using infinite (linear) inequalities. Thus, although it is straightforward to check that $\cG$ defines a convex set, we prove here that it is also a polytope. Furthermore, we also show a constraint on its facet-defining inequalities. 

\refstepcounter{theorem}

\begin{subtheorem}
\label{thm1:game-polytope}
The set $\cG$ is a polytope. 
\end{subtheorem}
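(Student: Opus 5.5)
The plan is to reduce the infinitely many constraints in Eq.~\eqref{eq:game-polytope} to finitely many, and then use that a bounded intersection of finitely many half-spaces is a polytope. A game is specified by a distribution $\mu$ on the $XY$ input pairs together with a predicate $V\in\{0,1\}^{ABXY}$. There are only finitely many predicates. The only source of infiniteness is therefore the continuous parameter $\mu$, which ranges over the simplex $\Delta_{XY}$.

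Fix a predicate $V$ and write $v_{xy}(p)=\sum_{ab}V(a,b|x,y)p(ab|xy)$. Then $\ip{G}{p}=\sum_{xy}\mu(x,y)v_{xy}(p)$. The classical value is $\omega_{\mathrm L}(\mu,V)=\max_{d}\sum_{xy}\mu(x,y)v_{xy}(d)$, where the maximum runs over the finitely many local deterministic behaviours $d$ (the vertices of $\cL$). Hence $\omega_{\mathrm L}(\cdot,V)$ is a convex, piecewise-linear function of $\mu$: it is a maximum of finitely many linear functions. The constraint ``$\ip{G_{\mu,V}}{p}\le\omega_{\mathrm L}(\mu,V)$ for all $\mu$'' says that the vector $v(p)=(v_{xy}(p))_{xy}\in\R^{XY}$ satisfies $\mu\cdot v(p)\le h_{K_V}(\mu)$ for every $\mu\in\Delta_{XY}$. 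Here $h_{K_V}$ is the support function of the polytope $K_V=\conv\{v(d)\}_d\subset[0,1]^{XY}$.

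The key step is to show that this family can be replaced by finitely many $\mu$'s. I will use the normal fan of $K_V$. The simplex $\Delta_{XY}$ is partitioned into finitely many polyhedral cells $C_{V,d}=\{\mu\in\Delta_{XY}:\mu\cdot v(d)\ge\mu\cdot v(d')\ \forall d'\}$. On each cell the function $\omega_{\mathrm L}(\cdot,V)$ equals the linear function $\mu\cdot v(d)$. On such a cell the constraint $\mu\cdot(v(p)-v(d))\le0$ is linear in $\mu$, so it holds on all of $C_{V,d}$ if and only if it holds at the finitely many vertices of $C_{V,d}$. Consequently $\cG$ is the intersection of $\NS$ with the finitely many half-spaces indexed by triples $(V,d,\mu)$, where $\mu$ is a vertex of $C_{V,d}$. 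Each of these half-spaces is itself a genuine game inequality, with classical value $\mu\cdot v(d)$.

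Finally, $\NS$ is a polytope and hence bounded. Intersecting it with finitely many half-spaces gives a bounded polyhedron, i.e.\ a polytope, by Minkowski--Weyl. An alternative route I would keep in reserve is the observation that $\mu$ acts homogeneously: one could drop normalization and view the condition as $v(p)$ lying in the polyhedron $K_V-\R^{XY}_{\ge0}$, restricted to nonnegative directions. That polyhedron has finitely many facets, which gives the same conclusion.

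The main obstacle is the justification of the cell decomposition. I need that each $C_{V,d}$ is a polytope with finitely many vertices, and that the vertices can be taken with rational entries, which also shows the facets are games with rational $\mu$. Both properties follow because each $C_{V,d}$ is cut out by finitely many linear inequalities with integer coefficients, since $v(d)\in\{0,1\}^{XY}$.
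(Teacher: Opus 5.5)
Your proof is correct, but your main argument takes a different route from the paper's. The paper fixes $V$ and first uses the minimax theorem to show that $p\in\cG_V$ if and only if the score vector $S_V(p)$ lies in the unbounded polyhedron $D_V=\conv\{g_V^{(s)}\}-\R^n_{\geq 0}$. This is exactly your $K_V-\R^{XY}_{\geq 0}$, i.e.\ the route you keep in reserve. It then applies the polytope-plus-cone form of Minkowski--Weyl, which gives $D_V$ finitely many facets with nonnegative normals; normalized, these normals are the priors $\mu_k$. Intersecting over the finitely many predicates finishes the proof.

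You instead work on the prior side. You cover $\Delta_{XY}$ by the linearity regions $C_{V,d}$ of $\omega_{\mathrm L}(\cdot,V)$ and use that a linear function is nonpositive on a polytope iff it is nonpositive at the vertices. (Strictly speaking the cells cover the simplex rather than partition it, but covering is all you use.) This avoids the minimax/separation step and the facet description of an unbounded polyhedron. The only polyhedral fact you need is that each bounded cell is the convex hull of its finitely many vertices. It also yields, directly, an explicit finite list of genuine games with their classical values $\mu\cdot v(d)$ and rational priors.

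The paper's route buys an intrinsic certificate: some mixture of deterministic strategies dominates the score vector componentwise. It also gives an irredundant list of games, namely the facets of $D_V$, which it reuses to argue that all facets of $\cG$ are games.

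The two lists in fact coincide. For $\mu\geq 0$, maximizing over $K_V$ is the same as maximizing over $D_V$. Hence each nonempty $C_{V,d}$ is a normal cone of $D_V$ intersected with the simplex. The vertices of these cells are therefore exactly the normalized facet normals of $D_V$, so your construction and the paper's are dual descriptions of the same finite family of games.
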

\begin{subtheorem}
\label{thm2:game-polytope}
All facets of $\cG$ admit a nonlocal game representation.
\end{subtheorem}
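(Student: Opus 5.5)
The plan is to reduce the infinite family of constraints in the definition of $\cG$ to a finite one, and then read off the facet structure from that finite description.

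\emph{Reduction to finitely many inequalities.} A game $G_{ab|xy}=\mu(x,y)V(a,b|x,y)$ is determined by two pieces of data: a predicate $V\in\{0,1\}^{XYAB}$, which ranges over a finite set, and a weight vector $\mu$ in the probability simplex $\Delta_{XY}$. For fixed $V$, both $\ip{G}{p}$ and each local value $\ip{G}{d}$ at a deterministic point $d$ of $\cL$ are linear in $\mu$. Hence $\omega_{\mathrm L}(G)=\max_d \ip{G}{d}$ is convex and piecewise linear in $\mu$.

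For fixed $V$ and fixed $p$, the constraint "$\ip{G}{p}\le\omega_{\mathrm L}(G)$ for all $\mu$" therefore says that a linear function of $\mu$ is dominated by a convex piecewise-linear one. I would rewrite it by introducing, for each deterministic strategy $d$, the cell $C_{V,d}\subseteq\Delta_{XY}$ on which $d$ attains the maximum. Each cell is a polytope cut out by finitely many linear inequalities in $\mu$. On $C_{V,d}$ the condition becomes $\sum_{xy}\mu(x,y)\,c_{xy}(p)\le 0$, where $c_{xy}(p)=\sum_{ab}V(a,b|x,y)\bigl(p(ab|xy)-d(ab|xy)\bigr)$ is linear in $p$.

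A linear inequality holds on a polytope if and only if it holds at the polytope's vertices. So the whole family of constraints for $(V,d)$ is equivalent to the finitely many constraints obtained by evaluating at the vertices $\mu^*$ of the cells $C_{V,d}$. At such a vertex, the constraint is exactly the game inequality for the game $(\mu^*,V)$. This is because $d$ is optimal there, so $\omega_{\mathrm L}=\ip{G}{d}$.

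It follows that $\cG$ is the intersection of $\NS$ with finitely many game inequalities. These are indexed by pairs $(V,\mu^*)$ with $\mu^*$ ranging over the vertices of the finitely many cells. This intersection is a bounded polyhedron, hence a polytope, which proves \cref{thm1:game-polytope}.

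\emph{Facets (\cref{thm2:game-polytope}).} Since $\cG$ is described by finitely many inequalities, each of its facets is the face cut out by one of the irredundant inequalities in that description. Working in the affine hull of $\NS$, each facet comes either from a game inequality in the finite list above or from a facet of $\NS$, i.e.\ a positivity constraint $p(ab|xy)\ge0$.

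The positivity constraints also admit game representations. Take $\mu=\delta_{(x,y)}$ and let $V$ be $1$ everywhere except at $(a,b)$ for that input pair. The game value is $1-p(ab|xy)$ and its classical value is $1$, so the game inequality is exactly $p(ab|xy)\ge0$. One needs $A\ge2$ or $B\ge2$ so that $V$ is not identically zero on that input, which holds in any nontrivial scenario. Hence every facet of $\cG$ admits a nonlocal game representation.

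\emph{Main obstacle.} The main obstacle is making the vertex-reduction step rigorous. This requires checking that the cells $C_{V,d}$ are genuine polytopes that cover $\Delta_{XY}$, and that at a vertex $\mu^*$ the game's classical value is indeed attained by $d$. Both facts follow from $\omega_{\mathrm L}$ being a finite maximum of linear functions of $\mu$. The boundedness of $\cG$ is immediate from $\cG\subseteq\NS$.
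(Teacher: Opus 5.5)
Your proof is correct, and its key step, reducing the infinitely many game constraints to finitely many, takes a genuinely different route from the paper.

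The paper fixes $V$ and works with the score vector $S_V(p)$ indexed by the $n$ input pairs. It uses the minimax theorem to show that $p\in\cG_V$ iff $S_V(p)\in D_V=\conv\{g_V^{(s)}\}-\R^n_{\geq 0}$, i.e.\ iff a single mixture of deterministic strategies dominates the score of $p$ on every input pair simultaneously. It then invokes the Minkowski--Weyl decomposition to conclude that $D_V$ is a polyhedron with nonnegative facet normals, and normalizes those normals into the priors $\mu_k$.

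You instead use that $\mu\mapsto\omega_{\mathrm L}(\mu V)$ is a maximum of finitely many linear functions. You subdivide $\Delta_{XY}$ into the cells where a fixed deterministic strategy is optimal and test a linear inequality at the cell vertices. The essential observation, which you make, is that these cells, and hence the list of priors, do not depend on $p$.

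Your route is more elementary: it needs no minimax theorem and no polyhedral decomposition theorem, only that a linear function is nonpositive on a polytope iff it is nonpositive at its vertices. The paper's route buys a structural characterization of $\cG_V$ through $D_V$ and an irredundant list of priors for each predicate. Your vertex list may contain redundant priors, which is harmless for the facet argument.

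Your facet step coincides with the paper's closing remark: every facet of a polytope is cut out by some inequality of any finite complete H-description. Your explicit game with $\mu=\delta_{(x,y)}$ and $V\equiv 1$ except at $(a,b)$, which yields exactly $p(ab|xy)\geq 0$, makes concrete the ``positivity-type games'' that the paper only asserts exist.
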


\cref{thm1:game-polytope} and \cref{thm2:game-polytope} are proven in Appendix \ref{app:G is a polytope with game facets}.

A main question now is whether the set $\cG$ is equal to the local polytope $\cL$. If this were the case, then every nonlocal behaviour would violate some nonlocal game. 
In the $(2,2,2,2)$ Bell scenario, the local polytope has only one class of non-trivial facet-inequality, namely the CHSH inequality.
Since CHSH does admit a nonlocal-game description~\cite{Brunner14}, we have that $\cG_{(2,2,2,2)} = \cL_{(2,2,2,2)} $ .
If we go to slightly higher scenarios, that is $(2,2,3,3)$ and $(3,3,2,2)$, we find again that all the facets of $\cL$ admit a nonlocal-game representation~\cite{araujo2020bell}, and thus the set $\cG$ in these scenarios is also exactly $\cL$.

In the following, we prove that for more complex scenarios, such as the scenario $(4,4,2,2)$, the equality $\cL = \cG$ breaks. We will show that this condition fails by showing explicit examples of facet-inequalities that are game-inequivalent.

We now state our main finding in the following theorem. \cref{fig:sets} visually summarizes its results. 

\begin{theorem}[Main result]
\label{thm:main}
In general, the local set is a strict subset  of the game-classical set: \begin{equation}
    \cL\subsetneq\cG.
    \label{eq:inclusion}
\end{equation} Namely, given a bipartite Bell scenario $(X,Y,A,B)$, the strict inclusion above holds for
$X,Y\geq2$ and $A,B\geq4$ and also for the particular case $(4,4,2,2)$.  This establishes the existence of nonlocal game-classical  behaviours.

Furthermore, quantum mechanics can produce such behaviours. Thus, in the mentioned scenarios we also have that
\begin{equation}
 \cQ\cap(\cG\setminus\cL)\neq\varnothing.
 \label{eq:quantum-separation-4422}
\end{equation}
\end{theorem}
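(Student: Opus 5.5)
By Theorem~\ref{thm2:game-polytope} every facet of $\cG$ is a nonlocal game, so $\cG$ is cut out by finitely many game inequalities $\ip{G}{p}\leq\omega_{\mathrm L}(G)$. The plan is therefore to exhibit, in each target scenario, a facet-defining Bell inequality $\ip{M}{p}\leq\beta_{\mathrm L}(M)$ of $\cL$ that is \emph{game-inequivalent}: no positive rescaling of $M$, plus a no-signalling-trivial term (normalization and marginal equalities in the affine hull of $\NS$), has the form $\mu(x,y)V(a,b|x,y)$ with $V\in\{0,1\}$. Given such an inequality, the facet $F=\{p\in\cL:\ip{M}{p}=\beta_{\mathrm L}(M)\}$ of $\cL$ cannot be a facet of $\cG$. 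Hence some point just beyond the relative interior of $F$ lies in $\cG\setminus\cL$.

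To make this precise, I would argue by contradiction. Suppose $\cG=\cL$. Then the facets of $\cG$ coincide with those of $\cL$, and facet-defining inequalities are unique up to positive scaling and equalities of the affine hull. So $M$ would be equivalent to a game, which is a contradiction. Thus $\cL\subsetneq\cG$.

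\textbf{Game-inequivalence certificate.} I would first treat $(4,4,2,2)$ by scanning the known facet classes of $\cL_{(4,4,2,2)}$, such as the $I_{4422}$ families. For each candidate $M$, I would write the general equivalent functional
\[
M'_{ab|xy}=\lambda M_{ab|xy}+\alpha_{a|x,y}+\beta_{b|y,x}+\gamma_{xy},
\]
where the $\alpha$ and $\beta$ terms are chosen to vanish on $\NS$: differences of a marginal across the other party's inputs, plus normalization constants. Requiring $M'_{ab|xy}\in\{0,\mu(x,y)\}$ with $\mu\geq0$ gives, for each $(x,y)$ block, the condition that all entries take at most two values, $0$ and $\mu(x,y)$. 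This is a finite combinatorial feasibility problem: over each choice of zero-pattern it is linear in $(\lambda,\alpha,\beta,\gamma,\mu)$. I would show infeasibility for the chosen $M$, either by a hand argument on a few blocks or via a Farkas-type dual certificate. This is the step I expect to be the main obstacle, because the gauge freedom of the no-signalling terms is large; I would first try to locate a single $2\times2$ sub-block whose entries are forced to take three distinct values.

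\textbf{Lifting.} For $X,Y\geq2$ and $A,B\geq4$, I would start from a facet of $(2,2,4,4)$ found in the same way, for instance one of the CGLMP-type or other known facets of that scenario that is shown to be game-inequivalent. I would then use Pironio's lifting: adding inputs or outputs by duplicating existing ones preserves facetness. Game-inequivalence should also survive lifting, because restricting an equivalent game to the original inputs and outputs yields an equivalent game for the unlifted inequality. Some care is needed here, since merged outputs sum the predicate. Finally, since Theorem~\ref{thm2:game-polytope} forbids a non-game facet of $\cG$, the lifted facet of $\cL$ again witnesses $\cL\neq\cG$.

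\textbf{Quantum part.} I would choose a quantum behaviour $q$ that violates $M$, using a numerical see-saw and then an analytic state and projective measurements. Let $p_0$ be a point in the relative interior of $F$, for instance the uniform mixture of local deterministic vertices saturating $M$. The mixture $p_t=(1-t)p_0+tq$ stays in $\cQ$ by convexity. For small $t>0$ it violates $M$, so $p_t\notin\cL$. It also satisfies the finitely many facet games of $\cG$: $p_0$ satisfies every facet game of $\cG$, and does so strictly, because $F$ is not contained in any facet of $\cG$ (the facets of $\cG$ are games, and $F$ itself is not one). Strict slack is preserved under small perturbations, so $p_t\in\cQ\cap(\cG\setminus\cL)$.
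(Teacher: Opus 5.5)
Your skeleton matches the paper's. Every facet of $\cG$ is a game, so a game-inequivalent facet $F$ of $\cL$ is not a facet of $\cG$. A relative-interior point of $F$ then has strict slack on all facets of $\cG$, and mixing it with a quantum violator gives a point of $\cQ\cap(\cG\setminus\cL)$. That part is sound.

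The gap is in what actually carries the theorem: producing game-inequivalent facets in every target scenario. Here your lifting step rests on a false principle. Restricting a game representative of an \emph{input}-lifted inequality to the original inputs does not give a game representative of the original inequality. In the lifted scenario the gauge term $r_{a|x}(y)+s_{b|y}(x)$ only needs $\sum_y r_{a|x}(y)=0$ and $\sum_x s_{b|y}(x)=0$ over the \emph{enlarged} input sets. If Alice gains a new input $x=X$, the restriction satisfies $\sum_{x<X}s_{b|y}(x)=-s_{b|y}(X)$, which need not vanish. What remains is a genuine Bob-marginal functional, not a constant on $\NS$, and the new zero-coefficient blocks are what absorb it. The paper's own $J_\star$ refutes your principle: it is game-inequivalent in $(4,4,2,2)$, yet its input lifting has a game representative in $(5,5,2,2)$. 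Moreover, by Ara\'ujo et al.\ every binary-output functional becomes a game once the inputs are doubled. For output liftings your restriction argument does work, since the zero-sum constraints are indexed per output and simply drop out. To reach all $X,Y\ge2$ you need a certificate that never uses the zero-sum constraints. The paper gets one for $F^1_\star$ and $F^2_\star$ by showing that a single $4\times4$ input block cannot be made two-valued even under arbitrary, unconstrained row and column offsets. Such an obstruction is stable under both input and output lifting.

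Second, the existence of the facets is left unproven, and the tactic you propose for $(4,4,2,2)$ cannot succeed. Any $2\times2$ block becomes two-valued under free row and column offsets: set three entries to $0$, and the fourth becomes $m_{00}-m_{01}-m_{10}+m_{11}$. So in binary-output scenarios an obstruction must couple different blocks through the zero-sum constraints. This is exactly what the paper's argument for $J_\star$ does. With $\rho_y=r_{0|0}(y)-r_{1|0}(y)$, two-valuedness of the blocks $(0,y)$ forces $\rho_y\in\{-3,-\tfrac32,0\}$ for $y=1,2,3$ and $\rho_0\in\{2,\tfrac72,5\}$. This is incompatible with $\sum_y\rho_y=0$.

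Your suggested candidates are also not the right ones. CGLMP is game-equivalent, and the search finds only $J_\star$ among the 174 classes of $(4,4,2,2)$ and only $F^1_\star,F^2_\star$ among the 33 of $(2,2,4,4)$. Finally, the quantum step needs explicit violations of these specific facets rather than an unspecified see-saw. For example, for $F^{1,2}_\star$, on behaviours supported on outputs $\{0,3\}$ both reduce to three times the CH expression. Tsirelson's strategy then gives $3(\sqrt2-1)/2>0$.
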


\begin{figure}[t]
\centering
\scalebox{1.5}{%
  \begin{tikzpicture}[
 x=1cm,y=1cm,line cap=round,line join=round,
 qline/.style={draw=black!52,line width=0.8pt},
 gline/.style={draw=black!78,line width=0.9pt,dashed},
 lline/.style={draw=black,line width=1pt},
 dotfacet/.style={draw=black,line width=1.35pt,dash pattern=on 0pt off 3pt,line cap=round},
 small/.style={font=\scriptsize},tinytext/.style={font=\tiny},
 statement/.style={font=\scriptsize,align=center}]
\draw[qline,rotate around={1.98:(2.154,2.928)}] (2.154,2.928) ellipse [x radius=2.104cm,y radius=1.627cm];
\draw[gline] (0.379344,2.029668)--(0.294976,3.663645)--(1.239474,4.380835)--(2.80,5.20)--(4.061348,3.640703)--(4.180421,2.519653)--(3.300410,1.579462)--(1.347448,1.413792)--cycle;
\fill[black!10]
 (0.379344,2.029668)--(0.294976,3.663645)--
 (1.239474,4.380835)--(3.314268,4.301545)--
 (4.061348,3.640703)--(4.180421,2.519653)--
 (3.300410,1.579462)--(1.347448,1.413792)--cycle;
\draw[lline] (4.061348,3.640703)--(4.180421,2.519653)--(3.300410,1.579462)--(1.347448,1.413792)--(0.379344,2.029668)--(0.294976,3.663645)--(1.239474,4.380835);
\draw[dotfacet] (1.239474,4.380835)--(3.314268,4.301545);
\draw[dotfacet] (3.314268,4.301545)--(4.061348,3.640703);
\node[font=\large] at (2.28,2.82) {$\mathcal L$};
\node[font=\normalsize] at (3.80,4.65) {$\mathcal G$};
\node[font=\large] at (4.10,1.78) {$\mathcal Q$};
\fill (2.25,4.43) circle[radius=1.4pt];
\node[font=\bfseries,anchor=west] (membership) at (-0.30,5.75)
 {$p_Q\in\cQ\cap(\cG\setminus\cL)$};
\draw[line width=0.45pt,black!90,
 -{Latex[length=4pt,width=2.6pt]},shorten <=2pt,shorten >=1.7pt]
 ([xshift=8pt]membership.south west)
 .. controls (0.10,4.95) and (2.08,5.32) .. (2.25,4.43);
\end{tikzpicture}
}
\caption{\label{fig:sets}
The local polytope
$\cL$ lies inside both the quantum set $\cQ$ and the game-classical  polytope $\cG$. The dotted facets of $\cL$ represent Bell inequalities which are not game-equivalent. The dashed facets of $\cG$ represent the additional nonlocal games needed to enclose $\cL$.  
Our result proves
$\cQ\cap(\cG\setminus\cL)\neq\varnothing$, since  $p_Q$ is a quantum realizable game-classical  nonlocal behaviour.
That is, $p_Q$ is quantum and Bell nonlocal, but remains in $\cG$. (This is merely schematic since $\cG$ is not characterized in general and  the relationship between $\cG$ and $\cQ$ is not known. It might be that $\cG$ is entirely contained in $\cQ$, or not.) }
\end{figure}

The proof of  \cref{thm:main} follows from the analysis of the facets of the local polytope, and specially how they behave under no-signalling transformations. Below, we define and characterize this class of equivalence of nonlocal games. 

\emph{Defining and finding game-inequivalent facets.---} Early work studying the relationship between Bell inequalities and  nonlocal games found that not all Bell functionals could be written as a  nonlocal games, under translation and scaling of the coefficients~\cite{silman2008relation}. 
In \cref{prop:Bellgame} of Appendix~\ref{app:definitions}, we prove that a Bell functional is equivalent to a nonlocal game under output-independent block translations and a common positive scaling if and only if every input block contains at most two distinct coefficient values, in similar vein to what was shown in Appendix B Ref.~\cite{araujo2020bell}.  It should be stressed that this type of game-equivalence does not consider no-signalling transformations, which are on par with positive scaling and output-independent block translations as free operations in the no-signalling space---the  relevant space in Bell scenarios. \footnote{Positive scaling and block translations can also be considered as free operations in the normalized signalling space; they are, in some sense, less physically motivated by the Bell scenario than no-signalling transformations, which  can be thought as a gauge freedom specific  to the particular physical setup considered. In the signalling space, no-signalling transformations would not be free.} 

We therefore allow arbitrary linear combinations of no-signalling constraints in addition to positive scaling and block translations~\cite{rosset2014classifying,araujo2020bell}. Two inequalities are equivalent when their functionals differ by a positive scaling and a constant on the normalized no-signalling affine space, with their bounds transformed accordingly. This gives the following definition.

\begin{definition}[Game equivalence]
\label{def:Bellequiv}
A Bell inequality with associated functional \(M_{ab|xy}\) is \emph{game-equivalent} if there exists a nonlocal game \(G_{ab|xy}\), such that
\begin{align}
  M_{ab|xy} = c\,G_{ab|xy} + \Gamma_{ab|xy},
\end{align}
where \(c>0\) is a positive scaling and  $\Gamma$ is a Bell functional which is constant on all no-signalling behaviours, i.e., \(\ip{\Gamma}{p_{NS}}= \gamma\) for all \(p_{NS}\in\NS\).

\end{definition}

In Appendix~\ref{app:definitions} we give a practical characterization of $\Gamma_{ab|xy} $.

Using the previous definition, recently, Araújo et al.~\cite{araujo2020bell}, considering arbitrary no-signalling transformations constructed an example of Bell functional that cannot be transformed into any nonlocal game. This example, which was explicitly designed to show game-inequivalence, does not detect nonlocal behaviours and, as far as the authors are aware, no other examples explicitly showing game non-equivalence are known.

In Appendix~\ref{app:MILP}, we show that deciding whether a Bell inequality is equivalent to a nonlocal game can be formulated as a mixed-integer linear program (MILP). This formulation allows us to use standard numerical solvers to search systematically for game representations. 
In case the MILP finds an explicit game representation for the given inequality, this can be independently verified in an analytical way.
On the other hand, due to numerical imprecision, if a MILP solver fails to find such a representation, we do not have an analytical certificate proving the game-inequivalence of the given inequality.
We therefore use the MILP both to find game representations and to identify candidates whose game-inequivalence we subsequently prove analytically. We also notice that MILP were previously used in the context of Bell inequalities for finding binary forms of inequalities Ref.~\cite{denis}. In our language, this can be viewed as nonlocal games with uniform prior.

The complete $(4,4,2,2)$ catalogue contains 174 nontrivial facet classes in
addition to positivity~\cite{zambrini2019complete}.  Our search with the MILP
returned game representatives for 173 of them and singled out one inequality, that we call
$J_\star$.  In compact
Collins--Gisin notation~\cite{I3322} it is
\begin{equation}
\begin{split}
J_\star={}&(3,3,3,3;\,3,-1,-1,-2;\,1,1,-3,2;\,1,-3,1,2)\\
&\mid(-5,1,0,0)\mid(-5,0,0,-3)\leq1.
\end{split}
\label{eq:Jstar-CG}
\end{equation}

The three groups separated by vertical lines multiply, respectively, $p(00|xy)$, Alice's marginals, and
Bob's marginals. In Appendix~\ref{app:J40}, the inequality is given in full notation together with the analytical proof that it cannot be game-equivalent in the \((4,4,2,2)\) scenario. This allows us to formulate the following lemma, proven in Appendix \ref{app:J40}: 
\begin{restatable}{lemma}{lemmaFourFourTwoTwo}
\label{lemma:4422}
In scenario $(4,4,2,2)$, out of the 174 facet Bell inequalities, only one,
$J_\star$, is game-inequivalent in that scenario.
\end{restatable}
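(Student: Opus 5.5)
The lemma has two halves, and I would treat them separately. The positive half says that 173 of the 174 nontrivial facet classes of the $(4,4,2,2)$ local polytope catalogued in~\cite{zambrini2019complete} are game-equivalent. For this I would use the MILP of Appendix~\ref{app:MILP} as a search tool only. For each facet class it returns a candidate game $G=\mu V$, a scaling $c>0$ and a no-signalling correction $\Gamma$. I would then certify each candidate exactly, in rational arithmetic, using \cref{def:Bellequiv}: check that $\mu$ is a probability distribution and that $V\in\{0,1\}$. Then check that $M-cG$ is constant on $\NS$. Using the characterization of $\Gamma$ in Appendix~\ref{app:definitions}, this means $M-cG$ lies in the span of the normalization and no-signalling tensors, which is a finite linear-algebra identity. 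Game equivalence is defined up to relabelings only through the facet class, so the check has to be done on one representative per class. Positivity facets are trivially games, with $V$ the indicator of a single outcome and a point-mass $\mu$.

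The negative half says that $J_\star$ admits no representation $M_{J_\star}=cG+\Gamma$. I would parametrize the general $\Gamma$ on the $(4,4,2,2)$ affine no-signalling space. Concretely, $\Gamma_{ab|xy}=\alpha_{xy}+\beta_{a|x}^{y}+\gamma_{b|y}^{x}$, subject to the constraints that make it constant on $\NS$, i.e.\ $\sum_y\beta^{y}_{a|x}$ and $\sum_x\gamma^{x}_{b|y}$ are effectively output-independent shifts. By \cref{prop:Bellgame}, what remains is to show that no choice of these parameters makes $M-\Gamma$ have at most two distinct values in every $(x,y)$ block, with the larger value minus the smaller value equal to $c\,\mu(x,y)\ge0$, and with the blocks compatible with a single scaling. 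Each of the 16 blocks has four entries $(a,b)\in\{0,1\}^2$. Having at most two distinct values means one of finitely many partition patterns holds, for example $\{00\}$ versus the rest, $\{00,11\}$ versus $\{01,10\}$, and so on. Within a $2\times2$ block, additive row and column shifts change only the block's ``interaction'' $M_{00}-M_{01}-M_{10}+M_{11}$ together with the marginal shifts. So the two-value condition becomes a small set of linear equalities and sign constraints on the free $\beta,\gamma$ variables.

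The argument then reduces to a case analysis over pattern assignments. The XOR-type pattern forces the block's marginal parts to vanish, while the single-outlier patterns fix the relation between marginal and correlation terms up to a sign. I would combine the linear constraints coming from rows (fixed $x$, varying $y$) and columns, using the no-signalling coupling of the $\beta^y_{a|x}$ across $y$. The goal is to derive a contradiction: the Collins--Gisin coefficients $(-5,1,0,0)$, $(-5,0,0,-3)$ and the correlator row entries $3,-1,-1,-2$ cannot simultaneously satisfy the induced equalities for any assignment.

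The main obstacle is controlling this case explosion, since naively there are up to $7^{16}$ pattern assignments. I would first use the structure of $J_\star$: blocks whose correlation coefficient is $0$ or whose marginal terms vanish force few options. I would then look for a small sub-configuration, for instance Alice's input $x=0$ row with coefficient $-5$ together with Bob's $y=3$ column, whose local constraints are already infeasible. If a clean hand argument is elusive, I would produce an infeasibility certificate instead: for each pattern assignment the constraints are linear, so a Farkas dual vector with rational entries proves infeasibility, and exhaustive but pruned enumeration makes this rigorous.
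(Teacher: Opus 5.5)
Your positive half matches what the paper relies on: the MILP is used only as a search tool, followed by exact rational verification that $\mu$ is a distribution, that $V$ is $\{0,1\}$-valued, and that $M-cG$ lies in the span of the normalization and no-signalling tensors, with one representative per class. (Minor slip: a positivity facet corresponds to $V$ equal to the indicator of all outcomes \emph{except} one, because $c$ must be positive.)

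\textbf{The gap: the negative half is not proved.} This half is the real content of the lemma, and you never show that $J_\star$ has no game representative. You set up a pattern enumeration, identify its combinatorial explosion as the main obstacle, and then leave it unresolved. You defer either to a hand argument you have not found, or to Farkas certificates on a sub-configuration you have not identified. The LP-per-pattern reduction is sound in principle, but as written nothing is established. Also, your remark that row and column shifts ``change only the block's interaction'' reads backwards. The quantity $M_{00}-M_{01}-M_{10}+M_{11}$ is \emph{invariant} under those shifts, and that invariance is exactly what makes a short argument possible.

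\textbf{The missing idea: one row of blocks suffices.}
\begin{itemize}
\item Work only with the four blocks $(0,y)$.
\item For fixed Bob output $b$, take the difference between Alice's outputs $0$ and $1$. This cancels Bob's shift $\gamma^{0}_{b|y}$ and the block translation $\alpha_{0y}$. What remains is $J_{0b|0y}-J_{1b|0y}+\rho_y$, with the single parameter $\rho_y=\beta^{y}_{0|0}-\beta^{y}_{1|0}$ in your notation.
\item Your output-independence of $\sum_y\beta^{y}_{a|0}$ gives $\sum_y\rho_y=0$.
\item For $J_\star$ the two differences are $(-2+\rho_0,\,-5+\rho_0)$ at $y=0$ and $(3+\rho_y,\,\rho_y)$ at $y=1,2,3$. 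In every block they differ by the invariant interaction $3$.
\item Two-valuedness puts both differences in $\{-\Delta_{0y},0,\Delta_{0y}\}$. Whatever $\Delta_{0y}$ is, this forces the smaller one into $\{-3,-3/2,0\}$.
\item Hence the four numbers $-5+\rho_0,\rho_1,\rho_2,\rho_3$ all lie in $\{-3,-3/2,0\}$. Yet they sum to $-5$, which is not a multiple of $3/2$. This is a contradiction.
\end{itemize}
This is the obstruction you sensed: Alice's $-5$ marginal at $x=0$ set against a uniform interaction. It needs neither the $y=3$ column nor any pattern enumeration. With it, your negative half would be complete and would coincide with the paper's proof.
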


That is, since $\cG$ and $\cL$ are both polytopes and at least one facet of $\cL$ is not a facet of $\cG$, we can conclude that $\cG_{(4,4,2,2)} \neq \cL_{(4,4,2,2)}$, and thus the game-classical set and the local set differ for at least one scenario. Next, we show that in fact for any bipartite scenarios with at least $2$ inputs and $4$ outputs per party,  $\cG \neq \cL$.

\emph{Game-inequivalence preserved under lifting.---}
Here, the proof that $J_\star$ is game-inequivalent explicitly uses the number of inputs of the scenario, since it leverages no-signalling transformations to show a contradiction. In fact, a general
construction presented in Araújo et al.~\cite{araujo2020bell}, shows how   every binary-output functional (such as $J_\star$), can be represented as nonlocal game in a scenario where the inputs are doubled for Alice and Bob. Thus, though $J_\star$ is not game-equivalent in \((4,4,2,2)\) it will necessarily be so in \((8,8,2,2)\). In fact,
  $J_\star$ already has a nonlocal game representative in $(5,5,2,2)$, found through the MILP \cite{github_game_vs_inequ}. Thus, even if $J_\star$ proves that the inclusion is strict \(\cL\subsetneq\cG\) in the particular case of \((4,4,2,2)\), it cannot be leveraged to show that the inclusion is strict in higher input scenarios. To remove this limitation, we move away from scenarios with binary outputs (for which this limitation will always be true) and turn to the $(2,2,4,4)$ scenario. 
    A MILP search over the 33 classes of facet inequalities of the scenario's known (and possibly complete) catalogue~\cite{jesus2023tight}, resulted in two candidates, denoted $F^1_{\star}$ and $F^2_{\star}$. We give them both here in Collins-Gisin notation: 

\begin{equation}
\begin{split}
F^1_{\star}={}&(3,2,1,2,2,1,1,1,1;\,3,2,1,2,1,1,1,1,0;\\
&\;\;3,2,1,2,1,1,1,1,0;\,-3,-2,-1,-2,-1,0,-1,0,0)\\
&\mid(-3,-2,-1;\,0,0,0)\mid(-3,-2,-1;\,0,0,0)\leq 0.
\end{split}
\label{eq:F1-CG}
\end{equation}

\begin{equation}
\begin{split}
F^2_{\star}={}&(3,2,2,2,2,1,2,1,0;\,3,2,1,2,2,1,2,2,2;\\
&\;\;3,2,2,2,2,2,1,1,2;\,-3,-2,-2,-2,-2,-1,-2,-1,0)\\
&\mid(-3,-2,-2;\,0,0,0)\mid(-3,-2,-2;\,0,0,0)\leq 0.
\end{split}
\label{eq:F2-CG}
\end{equation}
As for $J_{\star}$, the three groups separated by vertical lines multiply, respectively, $p(ab|xy)$, Alice's marginals $p_A(a|x)$, and Bob's marginals $p_B(b|y)$, with $a,b\in\{0,1,2\}$ and $x,y\in\{0,1\}$. In the first group, the blocks separated by semicolons correspond to $(x,y)=(0,0),(0,1),(1,0),(1,1)$; within each block the entries are ordered $ab=00,01,02,10,\dots,22$. In the marginal groups, semicolons separate $x=0$ from $x=1$ (respectively $y$).

Both are shown to be game-inequivalent in Appendix~\ref{app:F26F29}, using a proof strategy also found in Araújo et al.~\cite{araujo2020bell},  wherein one uses just a single output block of the functional, corresponding to a unique input configuration, which in effect means the proof ignores the number of inputs and works for any extra number of outputs. This and the explicit nonlocal game descriptions for the other facet inequalities in \((2,2,4,4)\) gives the proof for the following Lemma:

\begin{restatable}{lemma}{lemmaTwoTwoFourFour}
\label{lemma:2244}
In scenario $(2,2,4,4)$, out of the 33 known facet Bell inequalities, only two,
$F^1_{\star}$ and $F^2_{\star}$, are game-inequivalent in that scenario.
Furthermore, neither becomes game-equivalent under any standard input or
output lifting.
\end{restatable}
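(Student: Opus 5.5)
The plan has two parts: a constructive part covering the 31 game-equivalent facet classes of the $(2,2,4,4)$ catalogue~\cite{jesus2023tight}, and an impossibility part for $F^1_\star$ and $F^2_\star$ that is robust under lifting.

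\emph{Constructive part.} For each of the 31 classes, the MILP of Appendix~\ref{app:MILP} returns a candidate game $G=\mu V$, a scaling $c>0$ and a no-signalling term $\Gamma$. Each candidate can be verified exactly with rational arithmetic. First, check $V\in\{0,1\}$ and that $\mu$ is a probability distribution. Second, check that $M-cG=\Gamma$ lies in the span of the normalization and no-signalling directions. By the characterization of $\Gamma$ in Appendix~\ref{app:definitions}, this means $\Gamma_{ab|xy}=\alpha_{xy}+f_{x}(a)\ldots$. More precisely, it is a combination of output-independent block constants together with terms of the form $\beta(a,x)\,[\delta_{y=y_1}-\delta_{y=y_2}]$ and their Bob-side analogues. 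Such a membership is a linear-algebra check. The bound then transforms as $\beta_L(M)=c\,\omega_L(G)+\gamma$.

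\emph{Impossibility part.} Following the strategy of Araújo et al.~\cite{araujo2020bell}, I would write $F^i_\star$ in full-probability notation and suppose $F^i_\star=cG+\Gamma$. The key observation is that no-signalling terms have a rigid local structure. Within a single input block $(x,y)$, any $\Gamma$ restricted to that block has the form $\Gamma_{ab|xy}=\alpha_{xy}+g_{xy}(a)+h_{xy}(b)$, since the no-signalling freedom only adds functions of $a$ alone or of $b$ alone inside each block. Hence, on one fixed block, $M_{ab}-g(a)-h(b)-\alpha$ must take at most two values, namely $0$ and $c\mu(x,y)$; if $\mu(x,y)=0$ it must be constant.

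The invariant I would exploit is the interaction part: the second differences
\[
\Delta M(a,a',b,b')=M_{ab}-M_{a'b}-M_{ab'}+M_{a'b'},
\]
which are unchanged by $\Gamma$ and equal $c\mu\,\Delta V$ with $\Delta V\in\{-2,\dots,2\}$. I would pick one block (for instance $(x,y)=(1,1)$, whose $3\times3$ Collins--Gisin entries form a $4\times4$ full block once zero-padded) and show that no $0/1$ matrix $V$ has second differences proportional to those of $M$. Equivalently, the matrix $M$ is not of the form $\text{row}+\text{column}+s\cdot(\text{0/1 matrix})$ for any scalar $s$. This amounts to a finite case analysis: fix $s$ via one nonzero second difference, then show that the forced pattern of $V$ yields a contradiction (some entry forced to a value outside $\{0,1\}$). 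Degenerate $\mu(x,y)=0$ is excluded because that block's interaction part is nonzero.

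\emph{Lifting.} Because the argument only uses one block and only second differences within it, it survives the standard liftings. Input lifting adds new blocks but leaves the chosen block intact, and extra inputs only enlarge $\Gamma$ by functions that are again of the form $g(a)+h(b)$ per block. Output lifting copies a row/column, which does not change the set of second-difference values. This establishes that neither $F^i_\star$ becomes game-equivalent.

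The main obstacle is the combinatorial core of the impossibility step: choosing the block and finding a short certificate, likely a small $3\times3$ or $2\times3$ minor whose second differences, such as values like $1,2,3$ in incompatible ratios, cannot all lie in $s\{-2,\dots,2\}$ consistently with a $0/1$ pattern. I would first brute-force over all $0/1$ $4\times4$ matrices modulo row/column additions to locate such a minor, then present it by hand.
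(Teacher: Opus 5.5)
Your proposal is correct and takes essentially the same route as the paper: exactly checkable MILP game representatives cover the other 31 classes, and for $F^1_\star$ and $F^2_\star$ a single-input-block obstruction with arbitrary (not zero-sum) row and column offsets survives every input or output lifting. The combinatorial core you flag as the main obstacle is immediate with your own second-difference invariant, so no brute force is needed: in the $(0,0)$ block (and likewise the $(1,1)$ block) of either inequality, the second differences against the last output are the zero-padded Collins--Gisin entries, which include values of absolute value $1$ and $3$, and these cannot both equal $s\,\Delta V$ with a common step $s=c\mu(x,y)$ and nonzero integers $\Delta V\in\{-2,\dots,2\}$ (this replaces the paper's case analysis on $\lambda$).
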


The lemmas above establish that $\cG \neq \cL$. However, they do not directly provide examples of behaviours in $\cG$ but not $\cL$. Next, we discuss approaches to find such examples.

\emph{Existence of a nonlocal quantum behaviour in $\cG$.---}
Combining  \cref{thm1:game-polytope} and \cref{thm2:game-polytope}, together with the fact that $J_\star$ is a facet of $\cL$ but not of $\cG$, we can conclude that the $(4,4,2,2)$ scenario contains behaviours which are nonlocal, but do not violate any game. Indeed, if there were no such behaviours, we would have $\cL = \cG$ in this scenario, and thus $J_\star$ would also be a facet of $\cG$, which cannot be the case as we have proven it is not game-equivalent.  

The behaviour-membership MILP in Eq.~\eqref{eq:MILP_behaviour}, introduced
below, provides a numerical way to search for nonlocal behaviours in
$\cG$. Appendix~\ref{app:quantumNL_in_G} gives a numerical example
and proves analytically that every quantum-violated, game-inequivalent
facet of $\cL$ yields quantum behaviours in $\cG\setminus\cL$.
Appendix~\ref{app:quantum_violations} supplies the facet-violating quantum
witnesses used in that construction. We formalize the geometric implication
in the following lemma.

\begin{lemma}
\label{lemma:qpoint}
A game-inequivalent facet of \(\cL\) that admits a quantum violation implies the existence of nonlocal quantum behaviours in $\cG$.
\end{lemma}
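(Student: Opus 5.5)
Let $F$ be a facet of $\cL$ with functional $M$ and local bound $\beta_{\mathrm L}(M)$, and assume $F$ is not game-equivalent in the sense of \cref{def:Bellequiv}. Assume also that some $q\in\cQ$ satisfies $\ip{M}{q}>\beta_{\mathrm L}(M)$. The plan has three steps. First, use the facet structure to find a point of $\cG\setminus\cL$ close to $F$. Second, move from that point toward the relative interior of $F$. Third, combine it with $q$ through convexity of $\cQ$.

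\textbf{Step 1.} By \cref{thm1:game-polytope}, $\cG$ is a polytope with $\cL\subseteq\cG$. By \cref{thm2:game-polytope}, each facet of $\cG$ is given by an inequality with a nonlocal game representation. Suppose no point of $\cG$ violates $M$ near the relative interior of $F$. Then $F$ would be a face of $\cG$ of full dimension $\dim\NS-1$, so it would be a facet of $\cG$, and hence game-equivalent. More precisely, two facet-defining inequalities for the same hyperplane in the affine hull of $\NS$ agree up to positive scaling plus a term constant on $\NS$. That is exactly the freedom $c\,G+\Gamma$ allowed in \cref{def:Bellequiv}. This contradicts the hypothesis on $F$. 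So $\cG$ contains a point $g$ with $\ip{M}{g}>\beta_{\mathrm L}(M)$.

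\textbf{Step 2.} We want such points arbitrarily close to every relative-interior point of $F$. Fix $f$ in the relative interior of $F$. No game hyperplane supports $\cG$ along all of $F$, so every facet inequality of $\cG$ that is tight at $f$ must also be tight on all of $F$. By Step 1 there is no such facet, so $f$ lies in the interior of $\cG$ relative to $\mathrm{aff}(\NS)$. Hence a small ball $B(f,\varepsilon)\cap\mathrm{aff}(\NS)$ is contained in $\cG$. This is the cleanest form of the statement: relative-interior points of a game-inequivalent facet of $\cL$ are interior points of $\cG$.

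\textbf{Step 3.} Take $f$ in the relative interior of $F$ and the quantum violator $q$. Since $\cL\subseteq\cQ$, we have $f\in\cQ$. By convexity, $p_t=(1-t)f+tq\in\cQ$ for all $t\in[0,1]$. For $t>0$,
\[
\ip{M}{p_t}=\beta_{\mathrm L}(M)+t\bigl(\ip{M}{q}-\beta_{\mathrm L}(M)\bigr)>\beta_{\mathrm L}(M),
\]
so $p_t\notin\cL$. For $t<\varepsilon/\|q-f\|$, the point $p_t$ lies in $B(f,\varepsilon)\subseteq\cG$. Both $f$ and $q$ lie in $\mathrm{aff}(\NS)$, so $p_t$ does too. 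Therefore $p_t\in\cQ\cap(\cG\setminus\cL)$.

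\textbf{Main obstacle.} The delicate step is Step 1/2: showing that $f$ is interior to $\cG$ rather than merely that $F$ is not a facet of $\cG$. Here is the route. $\cG$ is a polytope whose facets are all game inequalities. A point of $\cG$ lies on its boundary only if some facet inequality $\ip{G}{p}\le\omega_{\mathrm L}(G)$ is tight there. Since $\cL\subseteq\cG$ and $f$ is a relative-interior point of $F$, an inequality valid on $\cL$ and tight at $f$ must be tight on all of $F$. Its hyperplane therefore contains $F$. Because $F$ has codimension one, this hyperplane equals the hyperplane of $F$ within $\mathrm{aff}(\NS)$. This gives $G=c^{-1}(M-\Gamma)$ with $\Gamma$ constant on $\NS$, making $F$ game-equivalent, a contradiction.

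The affine-hull bookkeeping needs care: everything, including the balls and the uniqueness of supporting hyperplanes, must be done inside $\mathrm{aff}(\NS)$, and one must check that the constant offset matches. The offset does match, because both the game and $M$ are tight on $F$.
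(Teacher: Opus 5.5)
Your proof is correct and follows the paper's argument: take a relative-interior point $f$ of the game-inequivalent facet, use $\cL\subseteq\cG$ to show that any facet of $\cG$ tight at $f$ would be tight on all of $F$ and hence would define $F$, which contradicts Theorem~\ref{thm2:game-polytope}, conclude that $f$ is interior to $\cG$, and then mix $f$ with the quantum violator. Your Step~1 is superseded by Step~2. The ``so'' in Step~2 is a non-sequitur: tightness on all of $F$ follows from $\cL\subseteq\cG$ and $f\in\mathrm{relint}\,F$, as your final paragraph correctly says. That final paragraph's bookkeeping in $\mathrm{aff}(\NS)$ — two affine functions with the same zero hyperplane are positively proportional, which yields exactly the $cG+\Gamma$ form — makes explicit a step the paper leaves implicit.
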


\cref{lemma:qpoint} (whose proof is in Appendix \ref{app:quantumNL_in_G}) shows that, given a scenario, if there is at least one quantum-violated game-inequivalent facet inequality then there is a game-classical quantum behaviour violating this facet. 
In a nutshell, let $p_\star$ lie in the relative interior of a game-inequivalent facet of $\mathcal{L}$. Since $\mathcal{L}\subseteq\mathcal{G}$
and every facet of $\mathcal{G}$ is game-equivalent, $p_\star$
lies in the interior of $\mathcal{G}$. If $q$ is a quantum
behaviour violating that local facet, then
\begin{equation}
    p_t=(1-t)p_\star+tq
\end{equation}
is quantum and nonlocal for every $0<t\leq 1$, while remaining
in $\mathcal{G}$ for sufficiently small $t>0$.

In Appendix \ref{app:quantum_violations} we give examples of such quantum violations for \(J_\star\), \(F^1_\star\) and \(F^2_\star\). This is now enough to prove our main result that:
 \begin{equation*}
 \cQ\cap(\cG\setminus\cL)\neq\varnothing.
\end{equation*}

Indeed, first, from  \cref{lemma:4422}, it directly follows that $\cL\subsetneq\cG$ holds for scenario \((4,4,2,2)\). Second,   \cref{lemma:2244}  provides us with lifting-stable game-inequivalent inequalities and, since either inequality will appear as a lifted facet inequality of \(\cL\) in any higher scenario, since input and output lifting preserve the inequalities as facets~\cite{pironio2005lifting}, we show that the strict inclusion \(\cL\subsetneq\cG\) is true, as long as, Alice and Bob have, each, at least two inputs and four outputs. This precisely proves that $\cL\subsetneq\cG$ holds for scenario \((X,Y,A,B)\), for \(X,Y\geq 2\) and \(A,B\geq 4\).

All in all, we have that: i)  In scenario \((4,4,2,2)\), $J_{\star}$ is a game-inequivalent facet that admits a quantum violation; ii) Every scenario with $X,Y\geq2$ and $A,B\geq4$ contains a lifted version of an $F_{\star}$ inequality, that is, a game-inequivalent facet that admits a quantum violation; iii) any quantum-violated game-inequivalent facet of $\cL$ implies the existence of quantum nonlocal behaviours in $\cG$ . Thus, \cref{thm:main} is proven in full.

\emph{The $\cG$ membership problem.---} Here we define behaviour-membership in $\cG$ as an optimization problem. The idea is to write the maximization of the score of a nonlocal game applied on the target behaviour, over all possible games, minus the bound of that nonlocal game. That is, we try to maximize the quantum-to-classical gap for that behaviour. 

Consider the following optimization problem, defined for an inputted fixed $p(ab|xy)$:

\begin{equation}\label{eq:MILP_behaviour}
\begin{aligned}
c^* = \max_{G,\,V,\,\mu, \beta}\;\;  \sum_{abxy} G_{ab|xy}\, p(ab|xy) - \beta \\
\text{s.t.}\;\; \sum_{abxy} G_{ab|xy}\, D^{(k)}(ab|xy) \leq \beta \quad \forall k, \\
G_{ab|xy} = \mu(x,y) V(a,b|x,y) \\
 0 \leq \mu_{xy} \leq 1 , \; \sum_{x,y} \mu(x,y) = 1 \\
 V_{ab|xy} \in \{0,1\} 
\end{aligned}
\end{equation}
where $D^{(k)}(ab|xy)$ are the local deterministic behaviours of that scenario, i.e. the vertices of $\cL$. 

This optimization problem per se is nonlinear, since we impose $G_{ab|xy} = \mu(x,y) V(a,b|x,y)$. However, we show in Appendix \ref{app:G_membership_milp} that we can linearize this condition and arrive at a well-defined MILP. 
 
This allows a characterization of $\cG$ similarly to the one of $\cL$ :

\begin{equation}
    \begin{aligned}
    & p(ab|xy)  {\in} \cL   \;\; \Longleftrightarrow{}   \text{LP}_{\cL}(p(ab|xy) ) \;  \text{yields} \, v^* = 0 \\
    & p(ab|xy)  {\in} \cG   \;\; \Longleftrightarrow{}   \text{MILP}_{\cG}(p(ab|xy) ) \; \text{yields} \, c^* = 0
    \end{aligned}
\end{equation}
where $v^*$ is the minimum value $v \ge 0$ such that $(1-v) \, p(ab|xy) + v \,p_{L}(ab|xy) $ is in $\cL$, where $p_{L}(ab|xy)$ is a fixed behaviour in $\cL$. 

Thus, instead of an LP, one has to solve a MILP to determine whether a given behaviour is in $\cG$, which also admits tractable and efficient algorithms (although note that for both the membership in $\cL$ and $\cG$ one limitation is the number of local determinist strategies, increasing exponentially with the number of inputs). Finally, note that since we have proven that $\cG$ is a polytope, having its list of vertices would yield a similar LP-membership algorithm as the one for $\cL$. \footnote{While having its complete list of facets would provide a possibly even easier membership verification: check that the behaviour does not violate any of these facets. We say ``possibly" since the number of facets might be large.}

\emph{Discussion.---}  We have established that nonlocal games as originally defined, with a deterministic predicate, cannot, in general, be complete witnesses for nonlocality. Concretely, there are nonlocal behaviours that cannot provide a quantum advantage in any such nonlocal game. Furthermore, we showed that quantum mechanics can produce these nonlocal, game-classical  behaviours. On the road to answering this question we also establish interesting intermediate results and techniques. Namely, a Mixed-Integer-Linear-Program (MILP) that formalizes game equivalence for a Bell functional which allows us to obtain: i) the first examples of facet inequalities which are provably game-inequivalent; ii) hundreds of novel nonlocal game descriptions.  Furthermore, we established that the game-classical set of behaviours \(\cG\) is, in its own right, a polytope of which \(\cL\) is a proper subset of (except in some particular cases).  Since \(\cG\) is a polytope whose facets are game-equivalent, in a fixed scenario, it provides the smallest set of nonlocal games that enclose the local polytope. The relationship between \(\cG\) and \(\cQ\) still needs to be explored, namely, finding whether behaviours violating the facets exclusive to \(\cG\) can be quantum mechanical, and if these strictly game-nonlocal behaviours hold any operational meaning, other than being nonlocal. Finally, one could extend our analysis to scenarios where a nonlocal behaviour is used as a resource, together with free operations e.g., local operation with shared randomness (LOSR). We already know this is pertinent because all nonlocal behaviours in $(4,4,2,2)$ can be detected through input-lifting by nonlocal games in $(5,5,2,2)$, since we found a game-representative of the input-lifted version of $J_\star$. The question is open for scenario $(2,2,4,4)$, where no input-lifted game-representative of $F_\star^1$ and $F_\star^2$ can exist, and it might still be the case that nonlocal games are in fact LOSR-complete, i.e. that all nonlocal behaviours are game-activable. 

\emph{Data availability.---}
In our directory \cite{github_game_vs_inequ} we provide the algorithms and data discussed and used in this article. In particular, we provide the game representations of the 173 game-equivalent facet-inequalities in $(4,4,2,2)$, as well as the 31 game-equivalent facet-inequalities in $(2,2,4,4)$

\textit{AI disclosure.}
All analytical and numerical results were derived by the authors before the recent  advent of AI in the mathematics world. 
OpenAI's ChatGPT-6 Astra and Anthropic's Claude Opus 5.5 were used for text proofreading and producing the tikz code from a hand drawn sketch. The authors take full responsibility for the content of the work.

\emph{Acknowledgments.---}
The authors thank Sébastien Designolle for useful discussions on analytical certificates for LP and MILP, Emmanuel Zambrini Cruzeiro for providing the facets of scenarios $(4,4,2,2) $ and $(2,2,4,4)$, and Mateus Araújo for fruitful discussions.  RF acknowledges funding by FCT/MCTES - Fundação para a Ciência e a Tecnologia (Portugal). F.H. acknowledges funding by FCT/MCTES - Fundação para a Ciência e a Tecnologia (Portugal) through exploratory project 2023.12518.PEX.
This work was funded by QuantEdu France, a State aid managed by the French National Research Agency for France 2030 with the reference ANR-22-CMAS-0001 and the JCJC programme under grant number ANR-25-CE47-6396-01-HOQO-KS.

\appendix
\renewcommand*{\theHequation}{appendix.\thesection.\arabic{equation}}
\setcounter{secnumdepth}{1}


\section{Definitions and preliminary results} \label{app:definitions}

In this appendix we collect the main definitions and technical results concerning the characterization of Bell inequalities and nonlocal games. We restrict attention to bipartite scenarios.

First, we define the local polytope \(\mathcal{L}\). 

\begin{definition}[Local polytope]
\label{def:localpolytope}
Consider a bipartite Bell scenario \((X,Y,A,B)\), with conditional
probabilities \(p =\{p(ab|xy)\}_{abxy}\). Define the alphabets
\[
\mathsf X=\{0,\ldots,X-1\},\quad
\mathsf Y=\{0,\ldots,Y-1\},
\]
\[
\mathsf A=\{0,\ldots,A-1\},\quad
\mathsf B=\{0,\ldots,B-1\}.
\]
For deterministic assignments
\(f:\mathsf X\to\mathsf A\) and \(g:\mathsf Y\to\mathsf B\), define
\[
D^{f,g}_{ab|xy}
=
\delta_{a,f(x)}\delta_{b,g(y)} .
\]
The local polytope (V-description) is the convex hull of all deterministic local assignments:
\[
\mathcal L
=
\operatorname{conv}
\left\{
D^{f,g} : f:\mathsf X\to\mathsf A,\ g:\mathsf Y\to\mathsf B
\right\}.
\]

Equivalently, \(\mathcal L\)  is the set of behaviours that do not violate any
Bell inequality (H-description). That is,
\[
\mathcal{L}
= \{ p : 
\langle M,p\rangle \leq \beta_{\mathrm L}(M),
\quad
\forall M\},
\]
where
\[
\langle M,p\rangle
:=
\sum_{xyab} M_{ab|xy}\,p(ab|xy)
\]
and the local bound is
\[
\beta_{\mathrm L}(M)
=
\max_{f:\mathsf X\to\mathsf A,\;g:\mathsf Y\to\mathsf B}
\langle M,D^{f,g}\rangle .
\]
\end{definition}

Also, as defined in the main text (\cref{def:NLgame}), a Bell functional $G$ is a nonlocal game if there exists a probability distribution \(\mu(x,y)\) over the inputs, and a (deterministic) predicate $V(a,b|x,y)\in\{0,1\}$ 
such that \(
 G_{ab|xy}=\mu(x,y)V(a,b|x,y)\). 
Its associated local bound \(\nonumber \omega_{\mathrm L}(G)=\max_{p\in\cL}\ip{G}{p}\) is also called the classical value of the game. 

\begin{restatable}{lemma}{Bellone}
\label{prop:Bellgame}
A Bell functional \(M_{ab|xy}\) is equivalent to a nonlocal
game in the same scenario under block translation and positive
scaling if and only if, for every input pair \((x,y)\),
\[
M_{ab|xy}\in\{\lambda_1^{xy},\lambda_2^{xy}\}
\qquad\text{for all }a,b,
\]
where \(\lambda_1^{xy}\) and \(\lambda_2^{xy}\) are real
and distinct in nonconstant blocks. In this sense,
\[
\mu(x,y)V(a,b|x,y)\;\Leftrightarrow\;M_{ab|xy}.
\]
\end{restatable}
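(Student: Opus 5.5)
We prove the two directions separately, working with the allowed operations exactly as stated: $M\mapsto cM+T$, where $c>0$ and $T_{ab|xy}=t_{xy}$ is an output-independent block translation. Such a $T$ shifts $\ip{M}{p}$ by the constant $\sum_{xy}t_{xy}$ on every normalized behaviour, so the local bound shifts accordingly. Block translation adds the same constant to all entries of a block, and scaling multiplies every entry by $c$. Hence the number of distinct values taken within each block $(x,y)$ is invariant under both operations. This invariance drives the ``only if'' direction.

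\emph{Necessity.} Suppose $M=cG+T$ with $G_{ab|xy}=\mu(x,y)V(a,b|x,y)$ and $V\in\{0,1\}$. Within block $(x,y)$ the entries of $G$ lie in $\{0,\mu(x,y)\}$. Therefore the entries of $M$ lie in $\{t_{xy},\,t_{xy}+c\,\mu(x,y)\}$, which is a set of at most two values. If the block of $M$ is nonconstant, then $\mu(x,y)>0$, the predicate takes both values $0$ and $1$ there, and the two values $\lambda_1^{xy}=t_{xy}$ and $\lambda_2^{xy}=t_{xy}+c\mu(x,y)$ are distinct.

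\emph{Sufficiency.} Assume every block takes values in $\{\lambda_1^{xy},\lambda_2^{xy}\}$, ordered so that $\lambda_1^{xy}<\lambda_2^{xy}$ in nonconstant blocks. In a constant block, set $\lambda_1^{xy}=\lambda_2^{xy}$ equal to the common value. Define $d_{xy}=\lambda_2^{xy}-\lambda_1^{xy}\ge0$ and the predicate $V(a,b|x,y)=1$ iff $M_{ab|xy}=\lambda_2^{xy}$ in a nonconstant block, with $V\equiv0$ on constant blocks.

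If all blocks are constant, $M$ is itself a block translation. In that case take $G$ with $V\equiv0$ and any $\mu$, and absorb $M$ into $T$; this trivial case should be mentioned separately. Otherwise $D:=\sum_{xy}d_{xy}>0$, and we set $c=D$, $\mu(x,y)=d_{xy}/D$ and $t_{xy}=\lambda_1^{xy}$. A direct check entry by entry gives $M_{ab|xy}=\lambda_1^{xy}+d_{xy}V(a,b|x,y)=t_{xy}+c\,\mu(x,y)V(a,b|x,y)$. The bounds transform as $\beta_{\mathrm L}(M)=c\,\omega_{\mathrm L}(G)+\sum_{xy}t_{xy}$, because the translation term is constant on all normalized behaviours, in particular on the deterministic vertices.

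The only real subtlety is the choice of orientation in each block. The predicate must mark the larger value, so that the weight $\mu(x,y)\propto d_{xy}$ is nonnegative. Constant blocks must be handled by giving them zero weight rather than forcing a two-value split, and the degenerate all-constant case needs the separate treatment above. Everything else is a bookkeeping verification.
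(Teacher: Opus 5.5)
Your proof is correct and follows essentially the same route as the paper. For necessity, each game block takes values in $\{0,\mu(x,y)\}$, and block translation and positive scaling preserve two-valuedness. For sufficiency, you subtract $\lambda_1^{xy}$, normalize by $\sum_{xy}(\lambda_2^{xy}-\lambda_1^{xy})$ to obtain $\mu$, and treat the all-constant case separately as the zero game. Writing $M=cG+T$ instead of $(M-\lambda_1)/S=G$ just expresses the same invertible transformation in the opposite direction.
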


\begin{proof}
A nonlocal game has coefficients
\(G_{ab|xy}=\mu(x,y)V(a,b|x,y)\), so for each input
pair \((x,y)\) its entries belong to \(\{0,\mu(x,y)\}\).
Block translations and a common positive scaling preserve
the property of having at most two distinct values per block.
This proves one direction.

Conversely, suppose that
\(M_{ab|xy}\in\{\lambda_1^{xy},\lambda_2^{xy}\}\).
Without loss of generality, order the two levels as
\(\lambda_1^{xy}\leq\lambda_2^{xy}\), taking both equal
to the unique coefficient value for constant blocks.
By block translation,
\[
M'_{ab|xy}
:=M_{ab|xy}-\lambda_1^{xy}
\in\{0,\lambda_2^{xy}-\lambda_1^{xy}\}.
\]
Let
\[
S:=\sum_{x,y}
\bigl(\lambda_2^{xy}-\lambda_1^{xy}\bigr).
\]
If \(S>0\), define
\[
\mu(x,y)
:=\frac{\lambda_2^{xy}-\lambda_1^{xy}}{S},
\qquad
V(a,b|x,y):=\mathbf{1}[M'_{ab|xy}>0].
\]
Then \(\mu\) is a normalized input distribution,
\(V\) is a deterministic predicate, and
\[
M''_{ab|xy}:=\frac{M'_{ab|xy}}{S}
=\mu(x,y)V(a,b|x,y).
\]
Thus \(M\) is equivalent to a nonlocal game under
block translation and positive scaling.

If \(S=0\), all blocks are constant, and block
translation gives the zero game, represented by
\(V=0\) and any normalized input distribution.
\end{proof}

We show an important lemma characterizing \textit{no-signalling transformations}, which gives a useful parametrization of the transformations in \cref{def:Bellequiv}. 

\begin{lemma}
\label{lemma:NSbasis}
    Any \textit{strict NS}-equivalent representative of Bell functional \(M_{ab|xy}\) can be written in the form, 
\[
Z_{ab|xy}=M_{ab|xy}+r_{a|x}(y)+s_{b|y}(x),
\]
i.e \( T = r_{a|x}(y)+s_{b|y}(x)\) with \emph{zero-sum} constraints 
\begin{itemize}
    \item $\sum_{y}r_{a|x}(y)=0$ for each $(x,a)$ 
    \item $\sum_{x}s_{b|y}(x)=0$ for each $(y,b)$
\end{itemize}
\end{lemma}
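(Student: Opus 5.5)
The plan is to treat this as a statement about the annihilator of a linear subspace. I read ``strict NS-equivalent'' as meaning that the difference $T:=Z-M$ satisfies $\ip{T}{p}=0$ for every $p\in\NS$, i.e.\ the constant $\gamma$ of \cref{def:Bellequiv} is zero. Two things must be shown. First, every $T$ of the stated form vanishes on $\NS$. Second, and this is the real content, every $T$ vanishing on $\NS$ has that form.

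For the first direction I compute directly. Using the no-signalling conditions,
\[
\sum_{abxy} r_{a|x}(y)\,p(ab|xy)=\sum_{a,x} p_A(a|x)\sum_y r_{a|x}(y)=0 ,
\]
and symmetrically for $s_{b|y}(x)$ with Bob's marginals $p_B(b|y)$. So these terms are invisible on $\NS$.

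For the converse, let $W$ be the linear subspace of real tensors $q_{ab|xy}$ satisfying the homogeneous constraints
\[
\sum_b q(ab|xy)-\sum_b q(ab|xy')=0,\qquad
\sum_a q(ab|xy)-\sum_a q(ab|x'y)=0
\]
for all $x,a,y,y'$ and all $y,b,x,x'$. Equal block totals $\sum_{ab}q(ab|xy)$ across input pairs follow from these constraints. I first show that $\operatorname{span}\NS=W$. The inclusion $\operatorname{span}\NS\subseteq W$ is clear. For the reverse inclusion, take the uniform behaviour $u(ab|xy)=1/(AB)$, which lies in $\NS$ and is strictly positive. Given any $q\in W$, subtract from $q$ a multiple of $u$ so that its common block total becomes zero. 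Then $u+\epsilon q$ is nonnegative for small $\epsilon$, normalized, and no-signalling, hence lies in $\NS$. Therefore $q\in\operatorname{span}\NS$.

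Consequently a functional $T$ vanishing on $\NS$ vanishes on $W$. By finite-dimensional linear algebra it then lies in the row space of the constraint matrix defining $W$, so $T$ is a linear combination of the constraint coefficient tensors. The Alice-type constraint for $(x,a,y,y')$ has coefficient tensor $(\delta_{y,y_1}-\delta_{y,y_2})$ placed on the entries $(a,\cdot\,|x,y)$ and independent of $b$. A linear combination of these is a function $r_{a|x}(y)$ with $\sum_y r_{a|x}(y)=0$. Conversely, every zero-sum vector over $y$ is a combination of such differences, since the differences span the zero-sum hyperplane. The Bob-type constraints give $s_{b|y}(x)$ with $\sum_x s_{b|y}(x)=0$ in the same way.

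The step I expect to need the most care is the identification $\operatorname{span}\NS=W$, together with the treatment of the normalization. If one allows a nonzero constant $\gamma$ as in \cref{def:Bellequiv}, the normalization constraints contribute an extra block-constant term $c_{xy}$ with $\sum_{xy}c_{xy}=\gamma$. That term is not of the zero-sum form and must be accounted for separately. One option is to absorb it as a block translation. Another is to restrict to the strict case $\gamma=0$, which is what the lemma asserts. The positivity perturbation argument around $u$ is what rules out any further hidden annihilating directions.
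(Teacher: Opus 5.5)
Your proposal is correct and follows essentially the paper's route. In both, the functionals vanishing on $\NS$ are identified with the span of the Alice- and Bob-type no-signalling constraint tensors, which split into $r_{a|x}(y)$ and $s_{b|y}(x)$ with the zero-sum conditions; your perturbation argument around the uniform behaviour, proving $\operatorname{span}\NS=W$, also supplies the step the paper only asserts when it passes from ``$T$ vanishes on no-signalling behaviours'' to ``$T\in\mathcal{NS}^{\perp}$''.
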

\begin{proof}
Let \(Z_{ab|xy}=M_{ab|xy}+T_{ab|xy}\).
The transformation preserves the value of the functional on the no-signalling set iff
\begin{equation}
    \sum_{a,b,x,y}
    T_{ab|xy}\,
    p_{\rm NS}(ab|xy)
    =0
    \label{eq:T_NS_orthogonal}
\end{equation}
for every no-signalling behaviour \(p_{\rm NS}\). The form of \(T_{ab|xy}\) given in \cref{lemma:NSbasis} satisfies \cref{eq:T_NS_orthogonal} directly.

Conversely, suppose that \(T\) satisfies \cref{eq:T_NS_orthogonal}. Let \(\mathcal{NS}\) denote the linear subspace of arrays satisfying the no-signalling constraints. Then \(
    T\in \mathcal{NS}^{\perp} \), that is, \(T\) is the orthogonal space of no-signalling constraints.  Writing the no-signalling constraints as linear equations, \(T\) is given by linear combinations of the no-signalling constraint vectors.
 In bipartite scenarios, the two types of no-signalling constraints are
\[
\sum_a p_{\rm NS}(ab|xy)
    -
    \sum_a p_{\rm NS}(ab|x'y)
=0,
    \forall\,x,x',b,y,
    \]\[
    \sum_b p_{\rm NS}(ab|xy)
    -
    \sum_b p_{\rm NS}(ab|xy')
    =0,
    \forall\,y,y',a,x.
\]
The associated coefficient for the linear transformations are
\begin{subequations}
\begin{align}
    T^{(1)}_{ab|xy}
    &=
    \delta_{b b_0}\delta_{y y_0}
    \bigl(
        \delta_{x x_0}
        -
        \delta_{x x'_0}
    \bigr),
    \label{eq:T1_tensor}
    \\
    T^{(2)}_{ab|xy}
    &=
    \delta_{a a_0}\delta_{x x_0}
    \bigl(
        \delta_{y y_0}
        -
        \delta_{y y'_0}
    \bigr).
    \label{eq:T2_tensor}
\end{align}
\end{subequations}
The first type is independent of \(a\), whereas the second type is independent of \(b\). Moreover,
\begin{subequations}
\begin{align}
    \sum_x T^{(1)}_{ab|xy}
    &=
    \delta_{b b_0}\delta_{y y_0}
    \sum_x
    \bigl(
        \delta_{x x_0}
        -
        \delta_{x x'_0}
    \bigr)
    =0,
    \label{eq:T1_zero_sum}
    \\
    \sum_y T^{(2)}_{ab|xy}
    &=
    \delta_{a a_0}\delta_{x x_0}
    \sum_y
    \bigl(
        \delta_{y y_0}
        -
        \delta_{y y'_0}
    \bigr)
    =0.
    \label{eq:T2_zero_sum}
\end{align}
\end{subequations}

Thus, a generic tensor \(T\in\mathcal{NS}^{\perp}\) can be written as
\begin{equation}
\begin{split}
    T_{ab|xy}
    &=
    \sum_{b_0,y_0,x_0,x'_0}
    \lambda^{(1)}_{b_0 y_0 x_0 x'_0}
    T^{(1)}_{ab|xy}
    \\
    &\quad+
    \sum_{a_0,x_0,y_0,y'_0}
    \lambda^{(2)}_{a_0 x_0 y_0 y'_0}
    T^{(2)}_{ab|xy}.
\end{split}
\label{eq:T_linear_combination}
\end{equation}
Since the first sum is independent of \(a\), and the second is independent of \(b\), define
\begin{subequations}
\begin{align}
    s_{b|y}(x)
    &:=
    \sum_{b_0,y_0,x_0,x'_0}
    \lambda^{(1)}_{b_0 y_0 x_0 x'_0}
    T^{(1)}_{ab|xy},
    \label{eq:s_definition}
    \\
    r_{a|x}(y)
    &:=
    \sum_{a_0,x_0,y_0,y'_0}
    \lambda^{(2)}_{a_0 x_0 y_0 y'_0}
    T^{(2)}_{ab|xy}.
    \label{eq:r_definition}
\end{align}
\end{subequations}
Therefore
\begin{equation}
    T_{ab|xy}
    =
    r_{a|x}(y)+s_{b|y}(x).
    \label{eq:T_rs_decomposition}
\end{equation}
Furthermore, since every term in the first sum satisfies \cref{eq:T1_zero_sum}, and every term in the second sum satisfies \cref{eq:T2_zero_sum}, one has
\begin{subequations}
\begin{align}
    \sum_x s_{b|y}(x)
    &=0,
    &
    \forall\,b,y,
    \label{eq:s_zero_sum}
    \\
    \sum_y r_{a|x}(y)
    &=0,
    &
    \forall\,a,x.
    \label{eq:r_zero_sum}
\end{align}
\end{subequations}
Thus, every tensor orthogonal to the no-signalling subspace can be parametrized, without loss of generality, as\(
    T_{ab|xy}
    =
    r_{a|x}(y)+s_{b|y}(x),
\)
with
\begin{equation}
    \sum_y r_{a|x}(y)=0,
    \qquad
    \sum_x s_{b|y}(x)=0.
\end{equation}
\end{proof}

Including positive scaling and block translations, the general
equivalent representative takes the form
\begin{equation}
\label{eq:equivclass}
 Z_{ab|xy}
=cM_{ab|xy}+r_{a|x}(y)+s_{b|y}(x)+d_{xy},   
\end{equation}
where \(c>0\), \(d_{xy}\in\mathbb R\), and \(r,s\) satisfy
the zero-sum constraints above. Thus, \cref{eq:equivclass} gives a  practical parametrization of  \cref{def:Bellequiv} for game-equivalence, with \(\Gamma_{ab|xy} = r_{a|x}(y)+s_{b|y}(x)+d_{xy}\).

\medskip

\section{Formalizing game equivalence with Mixed-Integer-Linear-Programming}
\label{app:MILP}
Using \cref{eq:equivclass}, one can reframe the question of  game-equivalence of a Bell functional as a One-Shot Mixed-Integer-Linear-Program (MILP). The program is specified below:
\begin{widetext}
\begin{algorithm}[H]
\caption{One-shot MILP test for game equivalence of a Bell functional}
\label{alg:oneshot_milp}

\KwIn{Bell functional \(M=\{M_{ab|xy}\}_{a,b,x,y}\), input sets \(x,y\), output sets \(a,b\).}
\KwOut{A predicate \(V\) and prior \(\mu\), or a statement of game inequivalence.}

\BlankLine
Introduce binary variables
\[
    V_{ab|xy}\in\{0,1\},
    \qquad \forall\,a,b,x,y .
\]

Introduce continuous variables
\[
    Z_{ab|xy}\in\mathbb R,
    \qquad
    \lambda_{xy}\in[0,1],
    \qquad
    d_{xy}\in\mathbb R,
    \qquad
    c\ge 0 .
\]

Introduce no-signalling shift variables
\[
    r_{a|x}(y)\in\mathbb R,
    \qquad
    s_{b|y}(x)\in\mathbb R .
\]

\BlankLine
Solve the following MILP:
\[
\begin{aligned}
c^\star=\max\quad & c \\[1mm]
\mathrm{s.t.}\quad
&
Z_{ab|xy}
=
c\,M_{ab|xy}
+
r_{a|x}(y)
+
s_{b|y}(x)
+
d_{xy},
&& \forall\,a,b,x,y,
\\[1mm]
&
\sum_y r_{a|x}(y)=0,
&& \forall\,a,x,
\\[1mm]
&
\sum_x s_{b|y}(x)=0,
&& \forall\,b,y,
\\[1mm]
&
0\le Z_{ab|xy}\le V_{ab|xy},
&& \forall\,a,b,x,y,
\\[1mm]
&
\lambda_{xy}-(1-V_{ab|xy})
\le
Z_{ab|xy}
\le
\lambda_{xy}+(1-V_{ab|xy}),
&& \forall\,a,b,x,y,
\\[1mm]
&
V_{ab|xy}\in\{0,1\},
&& \forall\,a,b,x,y,
\\[1mm]
&
0\le \lambda_{xy}\le 1,
&& \forall\,x,y,
\\[1mm]
&
c\ge 0 .
\end{aligned}
\]

\BlankLine
\If{\(c^\star = 0\)}{
    \Return{Bell functional is game-inequivalent}
}
\Else{
    Extract the optimal values of \(V_{ab|xy}\) and \(\lambda_{xy}\)\;

    Define
    \[
        \mu(x,y)
        =
        \frac{\lambda_{xy}}
        {\sum_{x',y'}\lambda_{x'y'}} .
    \]

    \Return{\(V,\mu\)}
}

\end{algorithm}
\end{widetext}

The scalar \(c\) accounts for the irrelevant positive rescaling
of the Bell functional. For a functional \(M\) that is nonconstant
on the no-signalling set, an exact feasible solution with \(c>0\)
certifies that \(M\) is no-signalling equivalent, up to block shifts,
to a nonlocal game. The constraints force each adjusted block
\(Z_{\cdot\cdot|xy}\) to take only the two values \(0\) and
\(\lambda_{xy}\): if \(V_{ab|xy}=0\), then \(Z_{ab|xy}=0\),
while if \(V_{ab|xy}=1\), then \(Z_{ab|xy}=\lambda_{xy}\).
The normalized block levels define the prior \(\mu\).

If the exact optimum is \(c^\star=0\), there is no solution
with positive scaling, so \(M\) is not game equivalent.
Here we assume that \(M\) takes different values on at least
two no-signalling behaviours. If instead it takes the same
value on every no-signalling behaviour, it is already
equivalent to a trivial game and no MILP test is needed.

In practice, numerical implementations test for
\(c^\star>\varepsilon\), within a fixed solver tolerance
\(\varepsilon>0\). A returned representation provides a
rigorous certificate of game equivalence when its predicate,
prior, and transformation identities are verified exactly.
Failure to find a solution with \(c>\varepsilon\), however,
does not by itself provide a definite proof of game
inequivalence.

\section{\(J_\star\) is game-inequivalent in \((4,4,2,2)\)}
\label{app:J40}

In this section we prove that \(J_\star\) is the only
game-inequivalent facet inequality in the scenario \((4,4,2,2)\), following
the numerical evidence obtained from the MILP in
Appendix~\ref{app:MILP}. We restate \cref{lemma:4422} for convenience.

\lemmaFourFourTwoTwo*

First, we will prove a small combinatorial result which will be useful in what follows. 

\begin{remark}
\label{lemma:combinatorial}
   Let \(\lambda\in\mathbb R\) and \(S_\lambda:=\{-\lambda,0,\lambda\}\). If \(r\in S_\lambda\) and \(r+3\in S_\lambda\), then necessarily
\[
r\in \{-3, -3/2, 0\}.
\] 
\end{remark}
\begin{proof}
If \(r \in S_\lambda\) then either
    \begin{enumerate}
  \item \(r=0 \Rightarrow 3 \in S_\lambda \Rightarrow  \lambda = \pm3\).
  \item  \(r=\lambda \Rightarrow 3+\lambda \in S_\lambda \) thus,
  \begin{itemize}
      \item \(3+\lambda = - \lambda \Leftrightarrow \lambda = -3/2\)
      \item \(3+\lambda = 0 \Leftrightarrow \lambda = -3  \)
       \item \(3+\lambda =  \lambda \Leftrightarrow 3=0\)
  \end{itemize}
  Since the last one is impossible, then either \(r=-3\) or \(r=-3/2\).
  \item  \(r=-\lambda \Rightarrow 3-\lambda \in S_\lambda \) thus,
  \begin{itemize}
      \item \(3-\lambda = - \lambda \Leftrightarrow 3 = 0\)
      \item \(3-\lambda = 0 \Leftrightarrow \lambda = 3  \)
       \item \(3-\lambda =  \lambda \Leftrightarrow \lambda = 3/2\)
  \end{itemize}
  Since the first one is impossible, then either \(r=-3\) or \(r=-3/2\).
\end{enumerate}
\end{proof}
Let us give a representation of the Bell functional in full-probability notation. 
For each $(x,y)$ we obtain a \(2\times2\) block matrix
\[
M_{ab|xy}
=
\left(
\begin{array}{cc}
M_{xy}^{00} & M_{xy}^{01}\\
M_{xy}^{10} & M_{xy}^{11}
\end{array}
\right),
\]
The entire functional \({J_\star}_{ab|xy}\) is given below. Rows are grouped by Alice's input and columns by Bob's input; each block is indexed by the outputs:
\begin{equation}
    {J_\star}_{ab|xy} = 
\left(
\begin{array}{cc|cc|cc|cc}
-7 & -5 &  3 & 0 &  3 & 0 &  0 & 0\\
-5 &  0 &  0 & 0 &  0 & 0 & -3 & 0\\ \hline
 4 &  1 & -1 & 0 & -1 & 0 & -2 & 0\\
 0 &  0 &  0 & 0 &  0 & 0 &  0 & 0\\ \hline
 1 &  0 &  1 & 0 & -3 & 0 &  2 & 0\\
 0 &  0 &  0 & 0 &  0 & 0 &  0 & 0\\ \hline
 1 &  0 & -3 & 0 &  1 & 0 &  2 & 0\\
 0 &  0 &  0 & 0 &  0 & 0 &  0 & 0
\end{array}
\right)
\end{equation}

Now let us prove game inequivalence. 

\begin{proof}

From \cref{prop:Bellgame} and \cref{lemma:NSbasis} we know that if the functional admitted a description as a nonlocal game, then
\[
Z_{ab|xy}
=
{J_\star}_{ab|xy}
+
r_{a|x}(y)
+
s_{b|y}(x)
\in
\{\lambda^{xy}_1,\lambda^{xy}_2\},
\]
with \( \sum_y r_{a|x}(y)=0
\) for each \((x,a)\) and 
\(
\sum_x s_{b|y}(x)=0
\), for each \((y,b)\).

Indeed, for a nonlocal game each block must take at most two values
\(\{\lambda^{xy}_1,\lambda^{xy}_2\}\), hence \emph{every} within-block row/column difference belongs to
\[
\{-\Delta_{xy},0,\Delta_{xy}\},
\qquad
\Delta_{xy}:=\lambda^{xy}_2-\lambda^{xy}_1 .
\]
Focusing on the first row of four \(2\times2\) matrices, with \(x=0\) fixed, let us compute column differences
\[
Z_{ab|0y}
=
{J_\star}_{ab|0y}
+
r_{a|0}(y)
+
s_{b|y}(0)
\]
as a way to eliminate the column \(s\)-variables.

The left column differences
\[
\operatorname{lcd}(y)
=
Z_{00|0y}
-
Z_{10|0y},
\qquad
y\in\{0,1,2,3\},
\]
have the general structure
\[
\begin{aligned}
\operatorname{lcd}(y)
&=
\left({J_\star}_{00|0y}-{J_\star}_{10|0y}\right)
+
\left(r_{0|0}(y)-r_{1|0}(y)\right)
+\\
&
\left(s_{0|y}(0)-s_{0|y}(0)\right)
=
\left({J_\star}_{00|0y}-{J_\star}_{10|0y}\right)
+
\rho_y ,
\end{aligned}
\]
where
\[
\rho_y
=
r_{0|0}(y)-r_{1|0}(y).
\]
Likewise, the right column differences
\[
\operatorname{rcd}(y)
=
Z_{01|0y}
-
Z_{11|0y},
\qquad
y\in\{0,1,2,3\},
\]
are given by
\[
\begin{aligned}
\operatorname{rcd}(y)
&=
\left({J_\star}_{01|0y}-{J_\star}_{11|0y}\right)
+
\left(r_{0|0}(y)-r_{1|0}(y)\right)
+\\
&
\left(s_{1|y}(0)-s_{1|y}(0)\right)
=
\left({J_\star}_{01|0y}-{J_\star}_{11|0y}\right)
+
\rho_y .
\end{aligned}
\]

So for all \(y\) we have explicitly
\[
\begin{aligned}
\left\{
\begin{array}{l}
\operatorname{lcd}(0)=-2+\rho_0,\\[2pt]
\operatorname{rcd}(0)=-5+\rho_0,
\end{array}
\right.
&\qquad
\left\{
\begin{array}{l}
\operatorname{lcd}(1)=3+\rho_1,\\[2pt]
\operatorname{rcd}(1)=\rho_1,
\end{array}
\right.
\\[8pt]
\left\{
\begin{array}{l}
\operatorname{lcd}(2)=3+\rho_2,\\[2pt]
\operatorname{rcd}(2)=\rho_2,
\end{array}
\right.
&\qquad
\left\{
\begin{array}{l}
\operatorname{lcd}(3)=3+\rho_3,\\[2pt]
\operatorname{rcd}(3)=\rho_3.
\end{array}
\right.
\end{aligned}
\]

Two-valuedness of each block therefore demands that
\[
\begin{aligned}
\operatorname{lcd}(0),\operatorname{rcd}(0)
&\in
\{-\Delta_{x0},0,\Delta_{x0}\},
\\
\operatorname{lcd}(1),\operatorname{rcd}(1)
&\in
\{-\Delta_{x1},0,\Delta_{x1}\},
\\
\operatorname{lcd}(2),\operatorname{rcd}(2)
&\in
\{-\Delta_{x2},0,\Delta_{x2}\},
\\
\operatorname{lcd}(3),\operatorname{rcd}(3)
&\in
\{-\Delta_{x3},0,\Delta_{x3}\}.
\end{aligned}
\]

Invoking now  \cref{lemma:combinatorial}, we have that for \(y=1,2,3\), the adherence of the column differences to two-block values demands
\[
\rho_y\in\{-3,-3/2,0\},
\qquad
y=1,2,3.
\tag{A}
\]
For \(y=0\), we define
\[
r=-5+\rho_0,
\qquad
r+3=-2+\rho_0,
\]
and apply the lemma to obtain
\[
-5+\rho_0\in\{-3,-3/2,0\},
\]
which in turn implies
\[
\rho_0\in\{2,7/2,5\}.
\tag{B}
\]

To finalize the proof, let us show that the transformations that allow the column differences to adhere to this two-value constraint are supposed to be NS. As such, they would need to obey
\[
\rho_0+\rho_1+\rho_2+\rho_3=0,
\]
or equivalently,
\[
\rho_0
=
-\left(\rho_1+\rho_2+\rho_3\right).
\]
But since we have that
\[
\rho_0\in\{2,\frac72,5\}
\]
and
\[
-\left(\rho_1+\rho_2+\rho_3\right)
\in
\{0,\frac32,3,\frac92,6,\frac{15}{2},9\},
\]
which have empty intersection, the zero-sum constraint cannot be satisfied.

This contradicts the assumption that all four blocks can be made to have at most two values under a \emph{valid} NS transformation in 4422. Therefore no strict NS-equivalent representative of the facet can be a deterministic-predicate game in 4422.

Furthermore, since we have explicit nonlocal game descriptions for all other facets in the scenario (see \cite{github_game_vs_inequ}) we have proven \cref{lemma:4422}.
\end{proof}

\section{\(F_\star^{1}\) and \(F_\star^{2}\) remain game-inequivalent under standard liftings}
\label{app:F26F29}

In Araújo et al.~\cite{araujo2020bell}, it was shown that game-equivalence is guaranteed for binary output inequalities by allowing larger input scenarios. We restate the theorem verbatim from Appendix~B.

\begin{unnumberedtheorem}[Araújo et al., Appendix B]
Let $M$ be a bipartite Bell functional with $s$ inputs and two outputs per party.
There exists a deterministic nonlocal game $G$ with $2s$ inputs and two outputs
per party which is equivalent to $M$.
\end{unnumberedtheorem}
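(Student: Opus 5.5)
The plan is to give an explicit construction. We encode each input of the binary-output functional \(M\) twice and use the duplicated inputs to absorb the coefficients that a two-valued block cannot carry.

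First, normalize \(M\) using free operations (\cref{eq:equivclass}). In a \((s,s,2,2)\) scenario every no-signalling functional can be written, up to a constant, in Collins--Gisin form:
\[
\langle M,p\rangle=\sum_{x,y} c_{xy}\,p(00|xy)+\sum_x \alpha_x\,p_A(0|x)+\sum_y\beta_y\,p_B(0|y)+\text{const}.
\]
Each block then looks like \(\begin{pmatrix} c_{xy}&0\\0&0\end{pmatrix}\), which is already two-valued. The obstruction to game-equivalence comes only from the marginal terms. These terms can be redistributed across the blocks of a given row \(x\) (respectively column \(y\)), but they must be placed somewhere. Placed together with \(c_{xy}\), they generically produce three or four distinct values per block.

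Next, double the inputs: Alice's inputs become \((x,i)\) with \(i\in\{0,1\}\), and likewise Bob's become \((y,j)\). We define a game whose blocks split the roles. In the blocks \((x,0),(y,0)\) we put the correlator weight, proportional to \(\delta_{a0}\delta_{b0}\), or to \(\delta_{a\ne b}\)-type predicates after a block shift. The remaining blocks with \(i=1\) or \(j=1\) carry the marginals as predicates of the form \(V=\delta_{a0}\) (two-valued) or \(\delta_{b0}\), weighted by \(|\alpha_x|\) and \(|\beta_y|\). Negative coefficients are handled by complementing the predicate and shifting by a block constant. The prior \(\mu\) is obtained by normalizing these nonnegative weights, as in the proof of \cref{prop:Bellgame}.

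Then we must show that this game \(G\) is \emph{equivalent} to \(M\). Following Araújo et al., the natural meaning is that every deterministic strategy for \(G\) can be converted to one for \(M\) with an affinely related score, and conversely. Equivalently, the restriction of \(G\) to behaviours in which \((x,0)\) and \((x,1)\) are the same measurement reproduces \(c\langle M,p\rangle+\text{const}\). The key step is to show that the optimal classical and no-signalling strategies may be assumed to answer identically on both copies. Otherwise the duplication would let the players decouple the marginal blocks from the correlator blocks.

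I expect this step to be the main obstacle, because decoupling generally changes the local bound. To control it, I would instead make the marginal blocks themselves correlation blocks between copies, for example blocks \(((x,1),(y,0))\) with predicate \(a=b\) or \(a\neq b\). These blocks enforce consistency, since their value is maximal only when the copies agree. I would also pick the weights large enough that any disagreement costs more than it gains. Finally, I would verify, via the constant shifts of \cref{lemma:NSbasis}, that on the consistent subspace the marginal terms \(\alpha_x p_A(0|x)\) are recovered exactly.
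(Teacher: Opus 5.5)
\textbf{There is a genuine gap at the step you flag as the main obstacle.} The paper does not prove this statement itself; it quotes Araújo et al.\ verbatim. The obstacle comes from where you put the marginals. Your equivalence argument needs $(x,0)$ and $(x,1)$ to be the same measurement, so Alice's term $\alpha_x\delta_{a0}$ is being read off a block whose Alice input is the copy $(x,1)$. That block evaluates to $\alpha_x p_A(0|(x,1))$, the marginal of an unrelated measurement. The functional you build is therefore not of the form $\kappa\tilde M+\Gamma$ for the input-lifted $\tilde M$, and no statement about optimal strategies can change that.

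Your remedy does not close the gap. Consistency blocks such as $((x,1),(y,0))$ with predicate $a=b$ add terms like $\Pr[a=b\,|\,(x,1),(y,0)]$, which are not constant on $\NS$. So the difference between $G$ and $M$ is no longer a $\Gamma$ as in \cref{def:Bellequiv}, and no choice of penalty weights removes these terms; at best you get a different inequality with a related local bound. The notion of equivalence you adopt (deterministic strategies matched with affinely related scores) has the same problem. It only constrains the functionals on $\cL$, whereas the paper needs them to agree, up to $\kappa>0$ and a constant, on all of $\NS$.

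\textbf{The missing idea is that the fresh input must belong to the \emph{other} party.} Lift $M$ to $2s$ inputs with zero coefficients on the new blocks. The row offsets $r_{a|x}(y)$ of \cref{lemma:NSbasis} now only have to sum to zero over all $2s$ of Bob's inputs. You may therefore move the whole term $\alpha_x p_A(0|x)$ into a block $(x,y_{\mathrm{new}})$; in your notation this is $((x,0),(y',1))$ for any fixed $y'$. There it appears as $\alpha_x\delta_{a0}$, a block with values $\{\alpha_x,0\}$. Symmetrically, $\beta_y\delta_{b0}$ goes into $(x_{\mathrm{new}},y)$.

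Because this predicate ignores Bob's output, no-signalling gives $\sum_{a,b}\delta_{a0}\,p(ab|x\,y_{\mathrm{new}})=p_A(0|x)$ whatever Bob measures at $y_{\mathrm{new}}$. The new inputs never need to be copies of old ones, and no consistency argument arises. The old blocks then carry only $c_{xy}\delta_{a0}\delta_{b0}$, so every block is two-valued. \cref{prop:Bellgame} then yields a game $G$ with $\tilde M=\kappa G+\Gamma$ identically on $\NS$.

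This construction uses a single fresh input per party, so it already works with $s+1\le 2s$ inputs. That is consistent with the $(5,5,2,2)$ game representative of $J_\star$ reported in the paper.
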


Ideally we would like to have a proof of game-inequivalence that would be immune to liftings to larger input and output scenarios. A sufficient criterion to establish this was also presented in Appendix A of Araújo et al.~\cite{araujo2020bell}. The proof therein shows inequivalence just using a single input block of the functional, so it is completely agnostic to the number of inputs of the scenario. We will use this proof technique to prove \cref{lemma:2244}, which we restate here for convenience. 

\lemmaTwoTwoFourFour*

The two facets \(F_\star^{1}\) and \(F_\star^{2}\), represented below in full-probability notation. Rows are grouped by Alice's input and columns by Bob's input; each \(4\times4\) block is indexed by the outputs. For \(F_\star^{1}\) we have
\begin{equation}
    {F_\star^{1}}_{ab|xy}
=
\left(
\begin{array}{cccc|cccc}
-3 & -3 & -3 & -3 &  3 &  2 &  1 & 0\\
-3 & -2 & -2 & -2 &  2 &  1 &  1 & 0\\
-3 & -2 & -1 & -1 &  1 &  1 &  0 & 0\\
-3 & -2 & -1 &  0 &  0 &  0 &  0 & 0\\ \hline
 3 &  2 &  1 &  0 & -3 & -2 & -1 & 0\\
 2 &  1 &  1 &  0 & -2 & -1 &  0 & 0\\
 1 &  1 &  0 &  0 & -1 &  0 &  0 & 0\\
 0 &  0 &  0 &  0 &  0 &  0 &  0 & 0
\end{array}
\right).
\end{equation}

and for \(F_\star^{2}\),
\begin{equation}
    {F_\star^{2}}_{ab|xy} =
\left(
\begin{array}{cccc|cccc}
-3 & -3 & -3 & -3 &  3 &  2 &  1 & 0\\
-3 & -2 & -3 & -2 &  2 &  2 &  1 & 0\\
-3 & -3 & -4 & -2 &  2 &  2 &  2 & 0\\
-3 & -2 & -2 &  0 &  0 &  0 &  0 & 0\\ \hline
 3 &  2 &  2 &  0 & -3 & -2 & -2 & 0\\
 2 &  2 &  2 &  0 & -2 & -2 & -1 & 0\\
 1 &  1 &  2 &  0 & -2 & -1 &  0 & 0\\
 0 &  0 &  0 &  0 &  0 &  0 &  0 & 0
\end{array}
\right).
\end{equation}

A proof analogous to that for \(J_\star\) would establish the following weaker statement for \(F_\star^{1}\) and \(F_\star^{2}\).

\medskip

\noindent\textbf{Weak statement.}
\(F_\star^{1}\) and \(F_\star^{2}\) do not admit a description as nonlocal games in scenario $(2,2,4,4)$.

We will actually prove a stronger statement, from which the weak statement follows immediately.

\medskip

\noindent\textbf{Strong statement.}
Any bipartite lifted version of \(F_\star^{1}\) and \(F_\star^{2}\) do not admit a description as nonlocal games in scenario $(X,Y,A,B)$, with
$X,Y\geq2$ and $A,B\geq4$.

\medskip

We prove this by showing that, for each inequality, a single input block cannot be made two-valued by any choice of row and column offsets. Thus the contradiction holds for any lifting of inputs and outputs. 

\begin{proof}
Consider the first block of \(F_\star^{1}\),
\begin{equation*}
\left(
\begin{array}{cccc}
-3 & -3 & -3 & -3\\
-3 & -2 & -2 & -2\\
-3 & -2 & -1 & -1\\
-3 & -2 & -1 & 0
\end{array}
\right).
\end{equation*}
We can add a global shift of \(+3\) for free. From NS, add column offsets \(b,b',b''\) to columns \(2,3,4\), respectively, and row offsets \(a,a',a''\) to rows \(2,3,4\), respectively. The resulting matrix is

\[
M=
\begin{pmatrix}
0 & b & b' & b''\\
a & 1+b+a & 1+b'+a & 1+b''+a\\
a' & 1+b+a' & 2+b'+a' & 2+b''+a'\\
a'' & 1+b+a'' & 2+b'+a'' & 3+b''+a''
\end{pmatrix}
\]

Here we allow arbitrary row and column offsets within the selected block, without imposing the zero-sum constraints on the no-signalling shifts. This enlarges the allowed class of transformations: impossibility even with this freedom therefore rules out a no-signalling transformation making the block two-valued. By definition, all entries of the block must take at most two distinct values, namely \(0\), because we have already shifted globally the entire block, and some common nonzero value \(\lambda\). From the first row and first column it follows immediately that
\begin{equation*}
a,a',a'',b,b',b''\in\{0,\lambda\}.
\end{equation*}

Consider \(M_{34}=2+b''+a'\). Since \(a',b''\in\{0,\lambda\}\), the possible values of \(M_{34}\) are
\begin{equation*}
2,\qquad 2+\lambda,\qquad 2+\lambda,\qquad 2+2\lambda.
\end{equation*}
For \(M_{34}\) to belong to \(\{0,\lambda\}\), at least one of these values must equal either \(0\) or \(\lambda\). This forces
\begin{equation*}
\lambda\in\{2,-2,-1\}.
\end{equation*}

Next consider \(M_{44}=3+b''+a''\). Again using \(a'',b''\in\{0,\lambda\}\), the possible values of \(M_{44}\) are
\begin{equation*}
3,\qquad 3+\lambda,\qquad 3+\lambda,\qquad 3+2\lambda.
\end{equation*}
But since \(\lambda\in\{2,-2,-1\}\), we have
\begin{align*}
\lambda=2
&\quad\Longrightarrow\quad
M_{44}\in\{3,5,7\},\\
\lambda=-2
&\quad\Longrightarrow\quad
M_{44}\in\{3,1,-1\},\\
\lambda=-1
&\quad\Longrightarrow\quad
M_{44}\in\{3,2,1\}.
\end{align*}
In none of these cases can \(M_{44}\) belong to \(\{0,\lambda\}\). Therefore, no choice of row and column offsets can make this block take only two values, \(0\) and a common nonzero value \(\lambda\).

Now consider the first block of \(F_\star^{2}\),
\begin{equation*}
\left(
\begin{array}{cccc}
-3 & -3 & -3 & -3\\
-3 & -2 & -3 & -2\\
-3 & -3 & -4 & -2\\
-3 & -2 & -2 & 0
\end{array}
\right).
\end{equation*}
After applying the same global shift by \(+3\), and the same row and column offsets, we obtain

\begin{equation*}
M
=
\left(
\begin{array}{cccc}
0 & b & b' & b''\\
a & 1+b+a & b'+a & 1+b''+a\\
a' & b+a' & -1+b'+a' & 1+b''+a'\\
a'' & 1+b+a'' & 1+b'+a'' & 3+b''+a''
\end{array}
\right)
\end{equation*}

Again, the first row and first column imply
\begin{equation*}
a,a',a'',b,b',b''\in\{0,\lambda\}.
\end{equation*}

Focus on
\begin{equation*}
M_{33}=-1+a'+b'.
\end{equation*}
Since \(a',b'\in\{0,\lambda\}\), the possible values of \(M_{33}\) are
\begin{equation*}
-1,\qquad -1+\lambda,\qquad -1+\lambda,\qquad -1+2\lambda.
\end{equation*}
For \(M_{33}\) to belong to \(\{0,\lambda\}\), at least one of these values must equal either \(0\) or \(\lambda\). This forces
\begin{equation*}
\lambda\in\left\{-1,1,\frac12\right\}.
\end{equation*}

However, \(M_{44}=3+a''+b''\) can only take the values
\begin{equation*}
3,\qquad 3+\lambda,\qquad 3+\lambda,\qquad 3+2\lambda.
\end{equation*}
For the three possible values of \(\lambda\), this gives
\begin{align*}
\lambda=-1
&\quad\Longrightarrow\quad
M_{44}\in\{3,2,1\},\\
\lambda=1
&\quad\Longrightarrow\quad
M_{44}\in\{3,4,5\},\\
\lambda=\frac12
&\quad\Longrightarrow\quad
M_{44}\in\left\{3,\frac72,4\right\}.
\end{align*}
In none of these cases can \(M_{44}\) belong to \(\{0,\lambda\}\). This gives a contradiction for \(F_\star^{2}\).

Hence both \(F_\star^{1}\) and \(F_\star^{2}\) admit a single-input-block obstruction. Therefore both remain game-inequivalent under every standard input or output lifting.

Furthermore, since we have explicit nonlocal game descriptions for all other facets in the scenario (see \cite{github_game_vs_inequ}) we have proven \cref{lemma:2244}.
\end{proof}

Thus, though providing stronger statements, the proofs of \(F_\star^{1}\) and \(F_\star^{2}\) game-inequivalences were easier to establish, since the contradiction already arises within a single input block and does not require the zero-sum constraints used for \(J_\star\). 

\section{$\mathcal{G}$ is a polytope with game facet-inequalities} \label{app:G is a polytope with game facets}

Here we prove that the set of behaviours that do not violate any nonlocal game forms a polytope. Moreover, we will see that the facets of that polytope are all nonlocal games.

The set in question $\cG$ is defined as
\begin{equation}
  \cG = \{\, p \in \NS \mid G(p) \leq \omega_c(G) \;\; \forall G \,\}
\end{equation}
where $\NS $ denotes the set of valid no-signaling behaviours, $G(p)$ is the
value of a game applied on $p$, and $\omega_c(G)$ the classical value
of the game.

Let us focus on the bipartite case for convenience (while keeping in
mind that all steps apply to generic multipartite behaviours) and write
the ``non-game violation'' condition explicitly:
\begin{equation} \label{eq:no-violation}
  \begin{gathered}
    G\bigl(p(ab|xy)\bigr) \leq \omega_c(G) \iff \\
    \sum_{abxy} \mu(x,y) V(a,b|x,y)\, p(ab|xy) \\
    \leq \max_s \sum_{abxy} \mu(x,y) V(a,b|x,y)\, D^{(s)}(ab|xy)
  \end{gathered}
\end{equation}
where $D^{(s)}(ab|xy)$ is one of the $N$ deterministic strategies,
indexed by an integer $s \in \{1,\dots,N\}$.

For a behaviour $p(ab|xy)$ and a fixed predicate $V$, let us define
\begin{equation}
  S_V(p)(x,y) \equiv \sum_{ab} V(a,b|x,y)\, p(ab|xy)
\end{equation}
that is, $S_V(p)(x,y)$ is the score of behaviour $p$ for predicate $V$
on the pair of inputs $x,y$; and likewise the score of the
deterministic strategy $s$,
\begin{equation}
  g_V^{(s)}(x,y) \equiv \sum_{ab} V(a,b|x,y)\, D^{(s)}(ab|xy) .
\end{equation}

\paragraph*{Notation.}
From now on we regard $S_V(p)$, $g_V^{(s)}$ and $\mu$ as vectors of
$\R^n$ indexed by the $n$ input pairs $(x,y)$, we write
$\ip{\cdot}{\cdot}$ for the Euclidean product,
$\ip{\mu}{S_V(p)} = \sum_{xy} \mu(x,y) S_V(p)(x,y)$, and we denote by
$\Delx$ and $\Delq$ the probability simplices over the input pairs and
over the $N$ deterministic strategies respectively (that is, vectors of ($X\cdot Y$, respectively $N$) nonnegative components which sum to $1$). Inequalities
between vectors are understood componentwise.

With this notation, the set $\cG_V$ of behaviours that do not violate
any game built on the fixed predicate $V$ (with arbitrary prior) reads
\begin{equation} \label{eq:GV}
  \cG_V = \bigl\{\, p \in \NS \;\bigm|\;
  \ip{\mu}{S_V(p)} \leq \max_s \ip{\mu}{g_V^{(s)}}
  \;\; \forall \mu \in \Delx \,\bigr\} .
\end{equation}

The main step of the proof is the following lemma, which characterizes
the space defined by condition~\eqref{eq:GV}.

\begin{lemma} \label{D_V lemma}
  $p \in \cG_V$ if and only if $S_V(p) \in D_V$, where
  $D_V = \conv\bigl(\{g_V^{(s)}\}_s\bigr) - \R^n_{\geq 0}$.
\end{lemma}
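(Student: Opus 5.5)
The plan is to prove both directions of \cref{D_V lemma} with a separating-hyperplane argument applied to the convex set $D_V$, treating $v:=S_V(p)\in\R^n$ as a fixed vector. First, $D_V$ is a closed convex polyhedron: $\conv(\{g_V^{(s)}\}_s)$ is a polytope, $-\R^n_{\geq0}$ is a polyhedral cone, and their Minkowski sum is therefore a (closed) polyhedron. The whole statement then reduces to the claim
\[
v\in D_V \iff \ip{\mu}{v}\leq \max_s \ip{\mu}{g_V^{(s)}}\quad \forall \mu\in\Delx .
\]

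For the direction ($\Rightarrow$), suppose $v=\sum_s q_s g_V^{(s)}-w$ with $q\in\Delq$ and $w\geq0$. Then for any $\mu\in\Delx$, nonnegativity of $\mu$ gives $\ip{\mu}{w}\geq0$. Hence
\[
\ip{\mu}{v}\leq\sum_s q_s\ip{\mu}{g_V^{(s)}}\leq\max_s\ip{\mu}{g_V^{(s)}},
\]
which is exactly the membership condition in Eq.~\eqref{eq:GV}.

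For the direction ($\Leftarrow$), argue by contraposition. Suppose $v\notin D_V$. Since $D_V$ is closed and convex, strict separation gives some $\nu\in\R^n$ with
\[
\ip{\nu}{v}>\sup_{d\in D_V}\ip{\nu}{d}.
\]
The key observation is that $\nu\geq0$ componentwise. Indeed, if some $\nu_{xy}<0$, then moving a point of $D_V$ along $-e_{xy}$ (which stays inside $D_V$ because of the $-\R^n_{\geq0}$ summand) makes $\ip{\nu}{d}$ unbounded above, contradicting that the supremum is finite. Moreover $\nu\neq0$, so we may normalize to $\mu:=\nu/\sum_{xy}\nu_{xy}\in\Delx$. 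The supremum over $D_V$ of $\ip{\mu}{d}$ is attained with the cone part set to zero and a vertex of the polytope, i.e.\ it equals $\max_s\ip{\mu}{g_V^{(s)}}$. Thus $\ip{\mu}{v}>\max_s\ip{\mu}{g_V^{(s)}}$, so the game $G=\mu V$ is violated and $p\notin\cG_V$.

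The main point needing care is the sign argument showing that the separating normal lies in the nonnegative orthant, hence is normalizable to a prior. This is precisely where the recession cone $-\R^n_{\geq0}$ enters, and it is why $D_V$ must include that cone rather than just the convex hull. Closedness of $D_V$, needed for strict separation, follows from it being a polyhedron (Minkowski–Weyl). An alternative route that avoids separation entirely is LP duality: membership of $v$ in $D_V$ is the feasibility of $\sum_s q_s g^{(s)}\geq v$ with $q\in\Delq$, and by Farkas' lemma or minimax it is equivalent to
\[
\min_{q\in\Delq}\max_{\mu\in\Delx}\ip{\mu}{v-\textstyle\sum_s q_s g^{(s)}}\leq0,
\]
which after swapping min and max yields the same statement.
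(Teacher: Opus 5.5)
Your proof is correct, but for the converse it takes a different route from the paper. The easy direction (from $S_V(p)\in D_V$ to no violation) is the same as the paper's.

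\textbf{The paper's converse.} The paper first proves \cref{lem:minimax}. It rewrites the family of constraints over all priors as $\max_{\mu\in\Delx}\min_{q\in\Delq}\ip{\mu}{h_q}\le 0$, with $h_q=S_V(p)-\sum_k q_k g_V^{(k)}$. It then swaps the two optimizations by the minimax theorem. This produces a $q$ with $S_V(p)\le\sum_k q_k g_V^{(k)}$ componentwise, and the nonnegative slack $u$ is read off directly.

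\textbf{Your converse.} You argue by contraposition, strictly separating $S_V(p)$ from the closed convex set $D_V$. You then use the recession cone $-\R^n_{\geq0}$ to force the separating normal into the nonnegative orthant, so that it normalizes to a prior.

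\textbf{What each route buys.} The paper's route gives a positive certificate of membership: an explicit mixture of deterministic strategies whose input-wise scores dominate $S_V(p)$. It never needs closedness of $D_V$, only compactness of the two simplices. Your route does need closedness, which you correctly obtain from Minkowski--Weyl, or more simply because a compact set plus a closed set is closed. In exchange, it hands you the violated game $\mu V$ explicitly when membership fails. It also rests on the basic separation theorem rather than on a minimax theorem, which is itself usually proven by separation or LP duality.

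\textbf{A further advantage.} Your sign argument on the normal is exactly the mechanism the paper uses later in \cref{lem:polyhedron}, where it shows that every valid inequality for $D_V$ has nonnegative coefficients. So your proof of this lemma and the subsequent argument that the facets of $\cG_V$ are games share a single idea. The LP-duality alternative you sketch at the end is precisely the paper's proof.
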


Explicitly,
\begin{equation} \label{eq:DV}
  D_V = \Bigl\{\, d \in \R^n \;\Bigm|\;
  d = \sum_k q_k\, g_V^{(k)} - u , \;\;
  q \in \Delq , \; u \geq 0 \,\Bigr\} ,
\end{equation}
i.e.\ the convex hull of the $g_V^{(k)}$ shifted by arbitrary
nonpositive vectors.

\begin{proof}[Proof of \cref{D_V lemma}, ``if'' direction]
  If $S_V(p) \in D_V$ there exist $q \in \Delq$ and $u \geq 0$ with
  \begin{equation}
    S_V(p) = \sum_k q_k\, g_V^{(k)} - u .
  \end{equation}
  Evaluating against an arbitrary $\mu \in \Delx$,
  \begin{align}
    \ip{\mu}{S_V(p)}
      &= \sum_k q_k \ip{\mu}{g_V^{(k)}} - \ip{\mu}{u} \notag \\
      &\leq \max_k \ip{\mu}{g_V^{(k)}} ,
  \end{align}
  since $\ip{\mu}{u} \geq 0$ and a convex combination is bounded by its
  largest term. Hence $p \in \cG_V$.
\end{proof}

For the converse we use the following intermediate lemma.

\begin{lemma} \label{lem:minimax}
  One has $\ip{\mu}{S_V(p)} \leq \max_s \ip{\mu}{g_V^{(s)}}$ for all
  $\mu \in \Delx$ if and only if there exists $q \in \Delq$ such that
  $S_V(p)(x,y) \leq \sum_k q_k\, g_V^{(k)}(x,y)$ for all $x,y$.
\end{lemma}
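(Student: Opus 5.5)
The plan is to treat this as a finite zero-sum game and apply von Neumann's minimax theorem, or equivalently LP duality.

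The ``if'' direction is immediate. Suppose $S_V(p)\le \sum_k q_k g_V^{(k)}$ componentwise for some $q\in\Delq$. For any $\mu\in\Delx$, nonnegativity of $\mu$ gives
\[
\ip{\mu}{S_V(p)}\le \sum_k q_k\ip{\mu}{g_V^{(k)}}\le \max_s\ip{\mu}{g_V^{(s)}}.
\]

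For the ``only if'' direction, I will introduce the payoff function
\[
\Phi(\mu,q):=\ip{\mu}{S_V(p)-\sum_k q_k g_V^{(k)}}
\]
on $\Delx\times\Delq$. It is bilinear and both domains are compact convex simplices, so the minimax theorem gives
\[
\max_{\mu\in\Delx}\min_{q\in\Delq}\Phi(\mu,q)=\min_{q\in\Delq}\max_{\mu\in\Delx}\Phi(\mu,q).
\]
Next I compute the two sides. For fixed $\mu$, minimizing a linear function of $q$ over $\Delq$ is attained at a vertex, so
\[
\min_q\Phi(\mu,q)=\ip{\mu}{S_V(p)}-\max_s\ip{\mu}{g_V^{(s)}}.
\]
The hypothesis says this quantity is $\le 0$ for every $\mu$, hence the left-hand side is $\le 0$. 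On the other side, for fixed $q$, maximizing over $\mu\in\Delx$ picks out the largest component:
\[
\max_\mu\Phi(\mu,q)=\max_{x,y}\Bigl(S_V(p)(x,y)-\sum_k q_k g_V^{(k)}(x,y)\Bigr).
\]
The minimum over $q$ is attained by compactness and continuity. So there is a $q^\ast$ for which every component of $S_V(p)-\sum_k q^\ast_k g_V^{(k)}$ is $\le 0$, which is exactly the claim.

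The only point needing care is justifying the exchange of min and max. Since $\Phi$ is bilinear on a product of simplices, the finite-game minimax theorem applies directly. As a fallback I would give a self-contained separating-hyperplane argument. Suppose $S_V(p)\notin \conv\{g_V^{(k)}\}-\R^n_{\ge0}$, a closed convex polyhedral set. Then some vector $\mu$ strictly separates $S_V(p)$ from this set. Because the set is unbounded in every negative coordinate direction, $\mu\ge 0$ and $\mu\neq 0$, so $\mu$ can be normalized into $\Delx$. The separation then yields $\ip{\mu}{S_V(p)}>\max_s\ip{\mu}{g_V^{(s)}}$, contradicting the hypothesis. This version also delivers the converse of \cref{D_V lemma} directly: take $u=\sum_k q_k g_V^{(k)}-S_V(p)\ge0$.
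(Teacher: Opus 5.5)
Your proposal is correct and takes essentially the same route as the paper: both define the payoff $\ip{\mu}{h_q}$ on $\Delx\times\Delq$, reduce the hypothesis to $\max_\mu\min_q\ip{\mu}{h_q}\le 0$ because a linear function on a simplex is minimized at a vertex, swap the optimizations with the minimax theorem, and read off a $q$ with $h_q\le 0$ componentwise. Your added justification that the minimum over $q$ is attained by compactness, which the paper leaves implicit, and your separating-hyperplane fallback are sound extras rather than a different strategy.
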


\begin{proof}[Proof of \cref{D_V lemma}, ``only if'' direction]
  Let $p \in \cG_V$. By \cref{lem:minimax} there is a
  $q \in \Delq$ with
  $S_V(p)(x,y) \leq \sum_k q_k\, g_V^{(k)}(x,y)$ for all $x,y$.
  Setting
  \begin{equation}
    u \equiv \sum_k q_k\, g_V^{(k)} - S_V(p) \geq 0 ,
  \end{equation}
  we get $S_V(p) = \sum_k q_k\, g_V^{(k)} - u$, i.e.\
  $S_V(p) \in D_V$.
\end{proof}

\begin{proof}[Proof of \cref{lem:minimax}]
  Fix $\mu \in \Delx$. Then
  \begin{align}
    &\ip{\mu}{S_V(p)} \leq \max_s \ip{\mu}{g_V^{(s)}} \notag \\
    \iff\; &\min_s \ip{\mu}{S_V(p) - g_V^{(s)}} \leq 0 \notag \\
    \iff\; &\min_{q \in \Delq}
             \Bigl\langle \mu , S_V(p) - \textstyle\sum_k q_k g_V^{(k)}
             \Bigr\rangle \leq 0 ,
  \end{align}
  where in the last step we used that the minimum of a linear function
  over a polytope is attained at a vertex, so that minimizing over the
  whole simplex is equivalent to minimizing over its vertices.

  Introduce the residual vector
  \begin{equation}
    h_q \equiv S_V(p) - \sum_k q_k\, g_V^{(k)} \in \R^n .
  \end{equation}
  Requiring the condition for \emph{all} $\mu \in \Delx$ amounts to
  \begin{equation}
    \max_{\mu \in \Delx} \; \min_{q \in \Delq} \; \ip{\mu}{h_q} \leq 0 .
  \end{equation}
  The function $(\mu,q) \mapsto \ip{\mu}{h_q}$ is linear in $\mu$ and
  affine in $q$, and both are optimized over compact convex sets, so
  the minimax theorem allows us to swap the two optimizations \cite{Simons1995_minimax}.
  Moreover, maximizing a convex combination of the entries of $h_q$
  simply selects the largest entry. Therefore
  \begin{align}
    &\max_{\mu \in \Delx} \min_{q \in \Delq} \ip{\mu}{h_q} \leq 0
      \notag \\
    \iff\; &\min_{q \in \Delq} \max_{\mu \in \Delx} \ip{\mu}{h_q}
      \leq 0 \notag \\
    \iff\; &\min_{q \in \Delq} \; \max_{x,y} \; h_q(x,y) \leq 0
      \notag \\
    \iff\; &\exists\, q \in \Delq : \; h_q(x,y) \leq 0
      \;\; \forall x,y ,
  \end{align}
  which is the claim.
\end{proof}

\subsection{$\cG$ has only nonlocal game-equivalent facet-inequalities}

\cref{D_V lemma} replaced the infinitely many
inequalities~\eqref{eq:GV} by the single membership condition
$S_V(p) \in D_V$. The set $D_V$ of \cref{eq:DV} is an unbounded
polyhedron: the polytope
$P_g = \conv\bigl(\{g_V^{(k)}\}_k\bigr)$ extended towards $-\infty$ in
all directions. Such a set is characterized by finitely many
facet-inequalities with nonnegative coefficients,
\begin{equation}
  d \in D_V \iff \ip{c_k}{d} \leq b_k \;\; \forall k ,
  \qquad c_k \geq 0 ,
\end{equation}
as formalized in \cref{lem:polyhedron} below.

We thus have a finite list of inequalities with nonnegative
coefficients characterizing $S_V(p) \in D_V$, namely
$\ip{c_k}{S_V(p)} \leq b_k$. Such inequalities are defined up to a
positive rescaling, so we may normalize them,
\begin{equation}
  \frac{1}{\sum_{xy} c_k(x,y)} \ip{c_k}{S_V(p)} \leq \tilde{b}_k ,
\end{equation}
and define $\mu_k(x,y) \equiv c_k(x,y) / \sum_{x'y'} c_k(x',y')$, so
that $\mu_k \in \Delx$. The corresponding bound $\tilde b_k$ is the
maximal value of the left-hand side over $D_V$, which -- all
coefficients being nonnegative -- is attained on $P_g$:
\begin{equation}
  \tilde{b}_k = \max_s \ip{\mu_k}{g_V^{(s)}} .
\end{equation}
Recalling that $g_V^{(s)}(x,y) = \sum_{ab} V(a,b|x,y) D^{(s)}(ab|xy)$
we can now conclude.

\begin{tcolorbox}
  For a fixed predicate $V(a,b|x,y)$, a behaviour is in $\cG_V$ iff
  $S_V(p) \in D_V$, that is, iff for all $k$
  \begin{equation*}
    \ip{\mu_k}{S_V(p)} \leq \max_s \ip{\mu_k}{g_V^{(s)}} ,
  \end{equation*}
  or equivalently
  \begin{equation*}
    \begin{gathered}
      \sum_{abxy} \mu_k(x,y) V(a,b|x,y)\, p(ab|xy) \\
      \leq \max_s \sum_{abxy} \mu_k(x,y) V(a,b|x,y)\, D^{(s)}(ab|xy) ,
    \end{gathered}
  \end{equation*}
  where the $\{\mu_k\}$ are the normalized coefficients of the facets
  of $D_V$.

  Since there are finitely many predicates,
  $\cG = \bigcap_V \cG_V$ is a finite intersection of polytopes, hence
  a polytope.
\end{tcolorbox}

\begin{remark}
  We have finitely many linear constraints on $p(ab|xy)$, and we can
  conclude that $\cG$ is a polytope because $p(ab|xy)$ lies in
  $\NS$, \textit{i.e.} it is a valid behaviour, and lies in a polytope
  already. Dropping that hypothesis would leave a polyhedron but not a
  polytope, since game-type constraint allow for arbitrary negative
  coefficients on vectors fulfilling these games.
\end{remark}

\begin{remark}
  The game polytope could alternatively be defined directly by a finite list of games:
  positivity-type games, no-signaling games, and normalization games (which for us are trivial games, as we work in the $\NS$ space)
  and, for each predicate $V$, the facets of $\cG_V$, which are all
  games. Since any finite and complete list of valid inequalities for
  a polytope contains its facets, all facets of $\cG$ are games.
\end{remark}

\begin{lemma} \label{lem:polyhedron}
  Let $\{v_k\}$ be a finite set of vectors in $\R^n$ and let
  \begin{equation*}
    D = \conv\bigl(\{v_k\}\bigr) - \R^n_{\geq 0} .
  \end{equation*}
  Then $D$ is a polyhedron, and every valid inequality for $D$ has
  nonnegative coefficients.
\end{lemma}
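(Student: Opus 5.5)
The statement to prove is \cref{lem:polyhedron}: $D=\conv(\{v_k\})-\R^n_{\geq0}$ is a polyhedron, and every valid inequality for it has nonnegative coefficients. I would prove the two claims separately, the first by Minkowski--Weyl and the second by a direct recession argument.

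\textbf{Polyhedrality.} $P=\conv(\{v_k\})$ is a polytope, since it is the convex hull of finitely many points. Next, $-\R^n_{\geq 0}=\operatorname{cone}\{-e_1,\dots,-e_n\}$ is a finitely generated cone. Hence
\[
D = P + \operatorname{cone}\{-e_1,\dots,-e_n\}
\]
is the Minkowski sum of a polytope and a finitely generated cone. By the Minkowski--Weyl theorem, such a set is a polyhedron. It therefore admits a finite description $\{d:\ip{c_j}{d}\leq b_j,\ j=1,\dots,m\}$, and it is nonempty whenever $\{v_k\}$ is nonempty.

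\textbf{Nonnegativity of coefficients.} Let $\ip{c}{d}\leq b$ be valid for $D$, meaning it holds for every $d\in D$. Suppose, for contradiction, that $c_i<0$ for some $i$. Pick any $v\in P$. For every $t\geq0$ the point $v-te_i$ lies in $D$, and
\[
\ip{c}{v-te_i}=\ip{c}{v}-t\,c_i \to +\infty \quad (t\to\infty).
\]
This eventually exceeds $b$, a contradiction. So $c\geq0$ componentwise. The same argument applies in particular to each facet-defining inequality of $D$, which is exactly what the preceding discussion needs: it lets one normalize $c$ into some $\mu\in\Delx$ (when $c\neq0$).

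I would also record that the right-hand side may be taken as $b=\max_k\ip{c}{v_k}$. Because $c\geq0$, the supremum of $\ip{c}{\cdot}$ over $D$ is attained on $P$, since subtracting a nonnegative vector cannot increase the value. By linearity this maximum is attained at a vertex, so it equals $\max_k\ip{c}{v_k}$.

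There is no real obstacle here. The only care needed is in invoking Minkowski--Weyl correctly, namely the "polytope plus finitely generated cone" direction. One should also treat the degenerate case $c=0$: it yields only trivial inequalities $0\leq b$, which can simply be discarded from the facet list.
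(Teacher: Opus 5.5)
Your proof is correct and takes the same route as the paper. Polyhedrality comes from writing $D=P+\operatorname{cone}\{-e_1,\dots,-e_n\}$ and applying the polytope-plus-cone direction of Minkowski--Weyl, where the paper cites Ziegler, Theorem~1.2. Negative coefficients are excluded by pushing along the recession direction $-e_i$. Your extra remarks are sound and are left implicit in the paper: the tight right-hand side is $\max_k \ip{c}{v_k}$, and the argument needs $\{v_k\}\neq\varnothing$, since if $D$ is empty every inequality is valid.
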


\begin{proof}
  The set can be written as the sum of a convex polytope and a
  finitely generated polyhedral cone,
  \begin{equation*}
    D = P_g + U_- ,
  \end{equation*}
  where $P_g = \conv\bigl(\{v_k\}\bigr)$ and
  $U_- = \R^n_{\leq 0} = \operatorname{cone}\bigl(\{-\bm{e}_i\}\bigr)$,
  with $\bm{e}_i$ the canonical basis. A set is a polyhedron iff it
  can be written as the sum of a convex polytope and a polyhedral cone
  (see Ziegler~\cite[Theorem 1.2]{Ziegler1995}); hence $D$ is a
  polyhedron, i.e.\ defined by finitely many linear inequalities.

  Moreover any valid inequality on $D$ is of the form
  $\ip{f}{x} \leq b$ with $f \geq 0$: if some coefficient were
  negative, say $f_1 < 0$, then $f_1 x_1$ would be unbounded above,
  since $x_1$ can be made arbitrarily negative.
\end{proof}

\section{Explicit quantum violations of $J_\star$, $F_\star^1$ and $F_\star^2$}
\label{app:quantum_violations}

In this Appendix we give for each of the three game-inequivalent facets $J_\star$, $F_\star^1$ and $F_\star^2$ an explicit quantum behaviour violating it. By \cref{lemma:qpoint} each of these provides a proof for the existence of nonlocal quantum behaviours in $\mathcal{G}$ (see proof in Appendix \ref{app:quantumNL_in_G}). All constructions use the two-qubit maximally entangled state:

\begin{equation}
 \ket{\Phi^+} = \tfrac{1}{\sqrt{2}}\left( \ket{00} + \ket{11} \right),
\end{equation}

and projective measurements in the $X-Z$ plane, that is with elements of the form

\begin{equation}
  P_{\pm}(\theta) = \frac{ \1 \pm (\cos\theta \, Z + \sin\theta \, X) }{2}
\end{equation}
  
where $X$ and $Z$ are qubit Pauli matrices and we will associate $P_+(\theta)$ (resp.\ $P_-(\theta)$)  with output $0$ (resp. $1$). The only parameters are Alice's angles $\theta_x$ and Bob's angles $\phi_y$. Using state $\ket{\Phi^+} $ the behaviour in Collins-Gisin reads:

\begin{equation}
  p(00|xy) = \frac{1+\cos(\theta_x-\varphi_y)}{4},
  \qquad p_A(0|x) = p_B(0|y) = \frac12 .
  \label{eq:p00_phiplus}
\end{equation}

\subsection{Violation of $J_\star$ in $(4,4,2,2)$}

Remind that $J_\star$ in Collins-Gisin is given by 
\begin{equation}
\begin{split}
J_\star={}&(3,3,3,3;\,3,-1,-1,-2;\,1,1,-3,2;\,1,-3,1,2)\\
&\mid(-5,1,0,0)\mid(-5,0,0,-3)\leq1.
\end{split}
\end{equation}

With the three groups separated by vertical lines multiply, respectively, $p(00|xy)$, Alice's marginals, and Bob's marginals. 

\paragraph{Strategy.}

Alice and Bob share $\lvert\Phi^+\rangle$ and choose angles which are all multiples of $\pi/3$:

\begin{align*}
  x=0:&\ 0, & x=1:&\ 2\pi/3, & x=2:&\ \1, & x=3:&\ \pi/3,\\
  y=0:&\ \pi/3, & y=1:&\ -\pi/3, & y=2:&\ 0, & y=3:&\ 0,
\end{align*}

except Alice's measurement for $x=2$, which is the trivial POVM $\{\1,0\}$. Bob performs the same measurement for $y=2$ and $y=3$.

\paragraph{Check.}
From Eq.~\eqref{eq:p00_phiplus}, the correlators $\cos(\theta_x-\varphi_y)$ for $x\in\{0,1,3\}$ take only the values $\pm1,\pm\tfrac12$, while for $x=2$ one has $p(00|2y) = p_B(0|y) = \tfrac12$. This gives

\begin{equation}
  \big[p(00|xy)\big]_{x,y} =
  \begin{pmatrix}
    3/8 & 3/8 & 1/2 & 1/2\\
    3/8 & 0 & 1/8 & 1/8\\
    1/2 & 1/2 & 1/2 & 1/2\\
    1/2 & 1/8 & 3/8 & 3/8
  \end{pmatrix}.
\end{equation}

Summing row by row,

\begin{equation}
  \sum_{x,y} C_{xy}\, p(00|xy) = \frac{21}{4} + \frac{3}{4} + \frac{1}{2} + \frac{5}{4} = \frac{31}{4}.
\end{equation}
The marginals are $p_A(0|0)=p_A(0|1)=\tfrac12$ (and $p_A(0|2)=1$, which
does not appear in $J_\star$), and $p_B(0|0)=p_B(0|3)=\tfrac12$. Therefore
\begin{equation}
  J_\star(p) = \frac{31}{4} - \frac{5}{2} + \frac{1}{2} - \frac{5}{2} - \frac{3}{2}
  = \frac{7}{4} \;>\; 1 .
\end{equation}

\subsection{Violation of $F_\star^1$ and $F_\star^2$ in $(2,2,4,4)$}

We give again both inequalities here in Collins-Gisin notation: 

\begin{equation}
\begin{split}
F^1_{\star}={}&(3,2,1,2,2,1,1,1,1;\,3,2,1,2,1,1,1,1,0;\\
&\;\;3,2,1,2,1,1,1,1,0;\,-3,-2,-1,-2,-1,0,-1,0,0)\\
&\mid(-3,-2,-1;\,0,0,0)\mid(-3,-2,-1;\,0,0,0)\leq 0.
\end{split}
\label{eq:F1-CG_app}
\end{equation}

\begin{equation}
\begin{split}
F^2_{\star}={}&(3,2,2,2,2,1,2,1,0;\,3,2,1,2,2,1,2,2,2;\\
&\;\;3,2,2,2,2,2,1,1,2;\,-3,-2,-2,-2,-2,-1,-2,-1,0)\\
&\mid(-3,-2,-2;\,0,0,0)\mid(-3,-2,-2;\,0,0,0)\leq 0.
\end{split}
\label{eq:F2-CG_app}
\end{equation}

As for $J_{\star}$, the three groups separated by vertical lines multiply, respectively, $p(ab|xy)$, Alice's marginals $p_A(a|x)$, and Bob's marginals $p_B(b|y)$, with $a,b\in\{0,1,2\}$ and $x,y\in\{0,1\}$. In the first group, the blocks separated by semicolons correspond to $(x,y)=(0,0),(0,1),(1,0),(1,1)$; within each block the entries are ordered $ab=00,01,02,10,\dots,22$. In the marginal groups, semicolons separate $x=0$ from $x=1$ (respectively $y$).

Both facets can be violated by the \emph{same} behaviour, owing to the following observation. Consider a behaviour in which the outputs $1$ and $2$ never occur, i.e.\ $p(ab|xy)=0$ whenever $a\in\{1,2\}$ or $b\in\{1,2\}$. In Collins-Gisin notation, where output $3$ is implicit, every term of \cref{eq:F1-CG_app} and \cref{eq:F2-CG_app} involving the outputs $1$ or $2$ then vanishes, and only the coefficients of $p(00|xy)$, $p_A(0|x)$ and $p_B(0|y)$ survive. These are identical for the two inequalities: the first entry of each block is $3,3,3,-3$, and the marginal coefficients are $-3$ for $x=0$ (resp.\ $y=0$) and $0$ for $x=1$ (resp.\ $y=1$). Hence, on such behaviours,
\begin{align*}
  &F_\star^1(p) = F_\star^2(p) \\
  &= 3\Big[\, p(00|00) + p(00|01) + p(00|10) - p(00|11) - p_A(0|0) - p_B(0|0) \Big] \\
  &\;\overset{\mathcal L}{\le}\; 0,
\end{align*}
where we recognise $[\, p(00|00) + p(00|01) + p(00|10) - p(00|11) - p_A(0|0) - p_B(0|0) \Big]\leq0$ as the CHSH Bell inequality written in Collins-Gisin notation, also referred to it as the CH inequality~\cite{clauser1974experimental}.

\paragraph{Strategy.}

Alice and Bob share $\lvert\Phi^+\rangle$ and perform the standard Tsirelson measurements

\begin{align*}
  x=0:\ Z, \qquad & x=1:\ X, \\
  y=0:\ \frac{Z+X}{\sqrt2}, \qquad & y=1:\ \frac{Z-X}{\sqrt2},
\end{align*}

 outputting $0$ for the eigenvalue $+1$ and $3$ for the eigenvalue $-1$. As four-output POVMs, the elements associated with the outputs $1$ and $2$ are zero.

\paragraph{Check.}
From \cref{eq:p00_phiplus}, $p(00|xy) = \big(1+\tfrac{1}{\sqrt2}\big)/4$ for $(x,y)\neq(1,1)$, $p(00|11)=\big(1-\tfrac{1}{\sqrt2}\big)/4$, and $p_A(0|0)=p_B(0|0)=\tfrac12$. Therefore

\begin{align*}
     & F_\star^1(p) = F_\star^2(p) = \\
  &  3\left[ \frac{2+2\sqrt2}{4} - 1 \right]
  = \frac{3\left(\sqrt2-1\right)}{2} \approx 0.62 \;>\; 0 .
\end{align*} 

\section{Nonlocal quantum behaviours in $\cG$} \label{app:quantumNL_in_G}

Here we prove the existence of nonlocal quantum behaviours in $\cG$. We prove that these exist whenever a facet of $\cL$ admits a quantum violation, and is not a facet of $\cG$. This is thus the case for example with $J_{\star}$, $F_{\star}^1$ and $F_{\star}^2$. 

Let us consider a facet $F$ of $\cL$ and the following behaviour $p_{\star}$:

\begin{equation}
    p_{\star} = \sum_k c_k v_k
\end{equation}

where $c_k > 0, \sum_k c_k = 1$, and $v_k$ the vertices of $\cL$ saturating $F$. Since $c_k > 0 \; (\forall k)$, $p_{\star}$ is in the relative interior of facet $F$. 

Since $p_{\star}$ is in $\cL$, it is in $\cG$. Either (1) $p_{\star}$ is strictly inside of $\cG$, or (2) it lies on the boundary of $\cG$. If $F$ is not a facet of $\cG$, we can show that (2) is impossible: 

\begin{equation*}
    \begin{aligned}
    \text{If $p_{\star} \in \partial\cG$ there is a facet of $\cG$, $H x \leq b$, with } H p_{\star} = b \\
    \Rightarrow  H p_{\star} = \sum_k c_k H v_k = b \\
    \text{Since $\cL \subseteq \cG $, $H x \leq b$ is valid for $\cL$ too } \Rightarrow H v_k \leq b  \; \forall k \\
    \sum_k c_k H v_k = b \Leftrightarrow  H v_k = b  \; \forall k \\
    \end{aligned}
\end{equation*}

Thus, the facet-defining inequality $H x \leq b$ is saturated on \textit{all} vertices of the facet $F$: $H$ defines the same facet as $F$, a contradiction (since $H$ is a facet of $\cG$, thus a game, and $F$ is by assumption a game-inequivalent facet). That is, $p_{\star}$ is strictly inside of $\cG$. 

This allows us to construct nonlocal quantum behaviours in $\cG$. Let us now consider as an example inequality $J_{\star}$ and a quantum behaviour $p_Q$ that violates its local bound:

\[
J_{\star} \cdot p_Q > 1 
\]

and define for $0 \leq q \leq 1$

\[
p_T(q) = q \cdot p_Q + (1-q) \cdot p_{\star} 
\]

which thus fulfils for any $q > 0$

\[
J_{\star} \cdot p_T(q) > 1 .
\]

We thus have for any $0 < q \leq 1$

\begin{equation*}
    \begin{aligned}
        & p_T(q) \in \cQ \; \text{(by convexity) } \\
        & p_T(q) \centernot{\in} \cL   \text{(since it violates $J_{\star}$)}    
    \end{aligned}
\end{equation*}

Finally, since $p_{\star} $ is strictly in $\cG$, there exists a $q^* > 0$ such that $p_T(q^*) \in \cG$. Therefore, $p_T(q^*)$ is a quantum nonlocal behaviour in $\cG$. 

\subsection{Constructing a MILP to compute the $\cG$-visibility} \label{app:G_membership_milp}

Here we construct a MILP for the behaviour-membership in $\cG$. The idea is to write the maximization of the score for a fixed behaviour, minus the bound of the game, over all games. Consider the following optimization problem, defined for an inputted fixed $p(ab|xy)$:

\begin{equation} 
\begin{aligned}
c^* = \max_{G,\,V,\,\mu,\,\beta}\;\; & \sum_{abxy} G_{ab|xy}\, p(ab|xy) - \beta \\
\text{s.t.}\;\; & \beta \geq \sum_{abxy} G_{ab|xy}\, D^{(k)}(ab|xy) \quad \forall k, \\
& 0 \leq \mu_{xy} \leq 1 \quad \forall x,y,  \\
&\sum_{x,y} \mu(x,y) = 1  \\
& V_{ab|xy} \in \{0,1\} \quad \forall a,b,x,y, \\
& 0 \leq G_{ab|xy} \leq V_{ab|xy} \quad \forall a,b,x,y, \\
& \mu_{xy} - (1 - V_{ab|xy}) \leq G_{ab|xy}  \leq  \mu_{xy} + (1 - V_{ab|xy}) \\ 
& \quad \forall a,b,x,y,
\end{aligned}
\end{equation}
where $D^{(k)}(ab|xy)$ are the local deterministic behaviours of that scenario, i.e. the vertices of $\cL$. 

First, note that this optimization problem is a MILP: the objective functions and the constraints are linear with respect to the variables $G_{ab|xy}, V(a,b|x,y), \mu(x,y), \beta$. Next, note that $G_{ab|xy}$ is forced to be of the form $G_{ab|xy} =  \mu(x,y)V(a,b|x,y)$, thanks to the last two constraints: whenever $V(a,b|x,y) = 0$ we have that $G_{ab|xy} = 0$ from condition $G_{ab|xy} \leq V(a,b|x,y)$, and whenever $V(a,b|x,y) = 1$ we have that $G_{ab|xy} = \mu(x,y)$ from constraint $\mu(x,y) - (1-V(a,b|x,y) \leq  G_{ab|xy} \leq \mu(x,y) + (1 - V(a,b|x,y))$. We impose $V(a,b|x,y)  \in \{0,1\} $, that is, $V(a,b|x,y)$ is a valid predicate, while $ 0 \leq \mu(x,y) \leq 1, \sum_{x,y} \mu(x,y) = 1$ defines a valid prior. That is, $G_{ab|xy}$ is a game. 

Therefore, if $p(ab|xy)$ violates a game, the MILP will return $c^* > 0$, since this is already the case using that violated game as a feasible point, and similarly, getting a $c^* > 0$ explicitly provides a game violation for $p(ab|xy)$. Finally, if the MILP return $c^* = 0$, we can thus conclude that $p(ab|xy)$ cannot violate any game, and is in $\cG$. 

\subsubsection{Estimating the critical $\cG$-visibility of $p_T(q)$}

We can now use a behaviour from Appendix \ref{app:quantum_violations}, \textit{i.e.} a nonlocal quantum behaviour violating a game-inequivalent facet, to probe for nonlocal quantum behaviours in $\cG$. By the proof just above, we know that these exist and we can verify it numerically using the MILP presented in \eqref{eq:MILP_behaviour}. 

We consider game-inequivalent inequality $F_\star^1$ and we set 

\begin{equation}
    p_{\star} = \tfrac{1}{N} \sum_{k=1}^N v_k
\end{equation}

where $\{v_k\}_{k=1..N}$ are the vertices of $\cL$ saturating  $F_\star^1$. We take the example of quantum violation given in Appendix \ref{app:quantum_violations}, that is:

\begin{align*}
    p_Q(00|xy) = \big(1+\tfrac{1}{\sqrt2}\big)/4 \;\; \text{for} \, (x,y)\neq(1,1) \\ 
p_Q(00|11)=\big(1-\tfrac{1}{\sqrt2}\big)/4,  \\
p_Q^A(0|0)=p_Q^B(0|0)=\tfrac12. 
\end{align*}

And we define $p_T(q) = q \cdot p_Q + (1-q) \cdot p_{\star} $. Running the MILP \eqref{eq:MILP_behaviour} we find that $p_T(0.042)$ violates a nonlocal game, and is thus a quantum point outside of $\cG$, while $p_T(0.041)$ return $c^*=0$ and thus does not violate any nonlocal game, up to the solver precision. We give the explicit quantum realization of the behaviour as well as the MILP algorithm to reproduce these results in our repository \cite{github_game_vs_inequ}.  

\end{document}